\definecolor{DarkBlue}{rgb}{0.1,0.22,0.45}
\definecolor{DarkGreen}{rgb}{0,0.5,0}
\definecolor{DarkRed}{rgb}{0.66,0.01,0.1}
\setlist{nolistsep}
\newtheorem{thm}{Theorem}
\newtheorem{deff}{Definition}
\newtheorem{rmk}{Remark}
\newtheorem{lem}{Lemma}
\newtheorem{proposition}{Proposition}
\newtheorem{assumption}{Assumption}
\newenvironment{newproofof}{\itshape{Proof of}}{\hfill$\square$}
\newcounter{alphasect}
\def\alphainsection{0}
\let\oldsection=\section
\def\section{%
	\ifnum\alphainsection=1%
	\addtocounter{alphasect}{1}
	\fi%
	\oldsection}%
\renewcommand\thesection{%
	\ifnum\alphainsection=1%
	\Alph{alphasect}
	\else%
	\arabic{section}
	\fi%
}%
\newcommand\shorttitle{EVENT-TRIGGERED DESIGN WITH GUARANTEED MINIMUM INTER-EVENT TIMES}
\newcommand\authors{MOHSEN GHODRAT AND HORACIO J MARQUEZ}
	\ifodd\value{page}
\authors
\shorttitle
\begin{document}

\title{\normalsize\textbf{EVENT-TRIGGERED DESIGN WITH GUARANTEED MINIMUM INTER-EVENT TIMES AND $\mathcal{L}_p$ PERFORMANCE}}
                                            
                                            \author{\small MOHSEN GHODRAT AND HORACIO J MARQUEZ 
\thanks{The authors are with the Department of Electrical and Computer Engineering, University of Alberta, Edmonton, AB T6G 2V4, Canada (e-mail:
                                            		ghodrat@ualberta.ca; marquez@ece.ualberta.ca)}}
                                            \date{}

                                            \maketitle
                                            
\begin{abstract}                          
\small\baselineskip=9pt 
In an event-based scenario, the system decides when to update the actuators based on a real time triggering condition on the measured signals. This condition can be defined in various forms and varies depending on the system properties and design problem. This paper proposes a framework to design the triggering condition while keeping $\mathcal{L}_p$ performance within desired limits. 
Our general framework captures several existing state-based triggering rules as a special case, and  can achieve the performance objectives while reducing transmissions. Indeed, this general structure is shown to enlarge the minimum inter-event time by a specified amount, for a desired period of time. Numerical examples suggest that the proposed mechanism effectively enlarges the average sampling time. 
\end{abstract}


\section{Introduction}
{I}{t} is nowadays well accepted that event-triggered control provides a competitive alternative to traditional time-triggered control (\cite{tabuada,Lunze}), offering
performance very similar to classical controls while reducing the  transmission of information between system components.
Event triggered control was pioneered by \cite{astrom_stochastic} and has lead to extensive research including formal stability analysis (\cite{PID,tabuada, Heemels_periodic,Postuyan_GFW,Girard}) and the references therein, and performance in its various forms (\cite{L_infty_gain,lemmonlinearL_2fullstate,L_2_selftriggered, passive_input_output,passive_delay,Dolk_Lp,Dolk_LP_ieee}).   

Two important aspects of an event-triggered control are (i) the design should satisfy some form of closed-loop performance, and (ii) should guarantee that the execution times have enough separation to avoid excessive sampling. 
This second point is critical to any event design. Note that reducing communication between plant and controller is, in fact,  the primary motivation behind event-based methods.  However, since the execution times depend on the occurrence of a new event, the triggering rule has to be designed in a way to avoid excessive triggering, particularly the existence of an accumulation point in which an infinite number of events are generated in finite-time (also known as Zeno phenomenon).
In this regard, most event-triggered laws define a threshold using the norm of a measured signal, typically, the state. Examples include \cite{tabuada,L_2_selftriggered,Girard,Postuyan_GFW}. Although this type of scheme has seen countless of successful applications and has provided an important place in the literature, it is, however, not free of limitations. 
Indeed, in \cite{tabuada}, the author designed a triggering rule departing from a continuous-time closed-loop input-to-state stable (ISS) system (with respect to measurement error) to achieve an event-triggered closed-loop stable system, with Zeno-free behaviour. Similar rules can also guarantee other desirable performance measures such as 
$\mathcal{L}_2$ input-output bounds ({\it e.g.,} see \cite{lemmonlinearL_2fullstate,L_2_selftriggered}). However, it was recently shown in \cite{Dolk_Lp} that in the presence of disturbance or sensor noise, static event-triggering rules defined in terms of solely the state vector norm cannot guarantee positive minimum inter-event time (MIET), thus becoming vulnerable to Zeno behaviour. The same issue may be encountered when dealing with dynamic triggering rules (\cite{Postuyan_GFW,Girard}), or output-based triggering rules (\cite{L_infty_gain}).


The observations above suggest that incorporating disturbance in event design is nontrivial. 
One way to address this issue is the so-called time-regularization method, recently proposed in \cite{Mahmoud_Abdolrahim,Dolk_LP_ieee}. The core idea is to enforce a dwell-time between sampling times, inspired by classical periodic sampling ({\it e.g.,} see \cite{explicit-nesic}). See also \cite{Chopra,Forni,Mazo_Tabuada} for a different approach. 
A common feature of all these studies is that the triggering condition (TC) is checked only after some specific time has elapsed since the most recent triggering instant and hence the positiveness of MIET is satisfied by construction. 
One drawback of this technique is that it mostly reduces to periodic samplings whenever TC is static and state is near the origin, \cite{Dolk_Lp,event-separation}.

\emph{Statement of contributions:}
Our primary objective is to design a prescriptive triggered policy for nonlinear event-triggered systems (ETSs) such that an $\mathcal{L}_p$ performance measure is satisfied in the presence of exogenous disturbances. 
As opposed to the time-regularization approach, in this article we focus on ``pure" event-based TC referring to continuous monitoring of the triggering rule without imposing a pre-defined dwell-time. 
Following this method, the previously mentioned shortcoming associated with time-regularization approach is prevented at the expense of a non-trivial positive MIET problem. Instead, to exclude Zeno-behaviour we introduce a new dynamic parameter in the proposed TC.

The results in this work are indeed a generalization of our previous work, \cite{Me_submitted}, where we studied the local $\mathcal{L}_2$ stability of nonlinear ETSs under state-dependent disturbances. Here, we first relax the restriction on the admissible input space and allow disturbances to be any signal in $\mathcal{L}_p$ space, thereby extend the results of \cite{Me_submitted} to global (non-local) $\mathcal{L}_p$ type performance. This extension, however, produces non-trivial challenges regarding the Zeno-exclusion that will be discussed in detail. Secondly, we cast a general structure for event-triggering design which specifically captures the proposed event in \cite{Me_submitted}.

In summary, our design addresses the two main characteristics of ETSs mentioned earlier, in the following senses:

\emph{Performance.} We focus on $\mathcal{L}_p$-gain performance of ETSs. We consider a general class of nonlinear control-affine system and a pre-designed state feedback controller whose continuous implementation satisfy some $\mathcal{L}_p$-gain performance level $\mu$. We provide a constructive TC design algorithm to retain the new $\mathcal{L}_p$-performance level at some $\mu_d$. The results here extends the prior work \cite{Me_submitted}, where the disturbance is assumed to be bounded. 
See also  \cite{Dolk_LP_ieee,Mahmoud_Abdolrahim} for a different approach. In comparison to these references, we rely on less conservative set of assumptions and a different approach that lead to a different structure for TC design. In fact, the assumptions made in \cite{Dolk_LP_ieee,Mahmoud_Abdolrahim} require some sort of dissipativity property for ETS which, we believe, is too strong to be applied to the problem considered here. Moreover, in the dwell-time approach, $\mu_d-\mu$ is lower bounded by some function of the dwell-period thus limiting how close the event-triggered performance $\mu_d$ can be made to its continuous-time counterpart $\mu$. This limitation, however, does not exist in our proposed technique.

\emph{Reduced transmission.} We propose a general framework for constructing state-based dynamic TCs in which the transmissions between plant and controller is significantly reduced. Rather than a single result, our proposed dynamic TC contains design parameters that can be selected for specific purpose and covers several well-known forms (namely, \cite{tabuada,Postuyan_GFW,Girard,me_iet,Dolk_Lp,Dolk_LP_ieee,event-separation,Me_submitted}) as special cases. 
In addition, as in \cite{Me_submitted}, we show that the proposed TC has the advantage over the existing dynamic techniques that a lower bound on inter-event times extension can be set.
Recent studies in \cite{Mahmoud_Abdolrahim,Dolk_LP_ieee} consider the more realistic output feedback event-triggered control problem. Our focus is on the state feedback case. While state feedback has its practical limitations, the advantage is that the result is less conservative than the output feedback case. An important feature of our design is that the introduced dynamic parameters are applicable to the available TCs without violating $\mathcal{L}_p$ or ISS performance. The resulting triggering rule thus enjoys improved sampling times. 


\textit{Notation.} 
$\mathbb{R}$ (resp., $\mathbb{Z}$) represents the field of real numbers (resp., set of integers). 
$\mathbb{R}^+_0$, $\mathbb{Z}^+_0$, $\mathbb{R}^+$, $\mathbb{Z}^+$ are the sets of nonnegative and positive elements of $\mathbb{R}$ and $\mathbb{Z}$. $\|x\|$ denotes the Euclidean norm of vector $x\in\mathbb{R}^n$.
$\mathcal{L}_p^n$ is the space of measurable n-dimensional signals $w$ with bounded $p$-norm defined as $(\int_{t_0}^{\infty}\|w(t)\|^pdt)^{\frac{1}{p}}$. The $\infty$-norm of $w$ is denoted by ${\|w\|}_{\infty}=\text{ess}\sup\{\|w(t)\| : t\geq t_0 \}$.
A function $f:\mathbb{R}^{n} \mapsto\mathbb{R}^p$ is said to be locally Lipschitz-continuous in an open set $D$, if for each $z\in B$ there exist $L_f,r\in\mathbb{R}^+$ such that $\|f({{x}})-f(\tilde{{x}})\|\leq L_f\|{{x}}-{\tilde{{x}}}\|$ for all ${{x}},{\tilde{{x}}}\in \{y\in D:\|y-z\|<r\}$. 
A sequence $\{x_i:i\in\mathbb{Z}^+_0\}$ is  uniformly isolated iff there exists some $r\in\mathbb{R}^+$ so that $|x_i-x_j|>r$ for any $i,j\in\mathbb{Z}^+_0$ with $i\neq j$. 

\section{Event triggered control system} \label{preliminaries}
Consider the nonlinear system of the following form \footnote{The assumed affine structure is not critical and can be relaxed at the expense of obtaining a more conservative triggering condition in the sense that the triggering threshold would be reached sooner, see Remark \ref{affine}.}:
\begin{eqnarray}\label{eq sys_0}
\dot{\xi}=f(\xi,d)+g(\xi)u,~~~
z=h(\xi,d),
\end{eqnarray}
where $\xi\in\mathbb{R}^n$, $u\in\mathbb{R}^m$, $d\in \mathcal{L}_{p}^q$, $z\in\mathbb{R}^s$ represent the state, control input, exogenous disturbance and measured output. 
The functions $f$, $g$ and $h$ are locally Lipschitz-continuous and $f(0,0)=0$, $h(0,0)=0$ so that $\xi=0$ is an equilibrium point of zero-input system. We will assume the state $\xi$ evolves from initial conditions $\xi_0=\xi(t_0)$ on an open subset of $\mathbb{R}^n$ containing the origin.
System (\ref{eq sys_0}) is said to be finite gain $\mathcal{L}_p$-stable and has an $\mathcal{L}_p$-gain $\leq\mu$ if there exist real numbers $\eta,t_0,T$, $\mu>0$, $p\geq 1$ and positive semi-definite function $\beta$ such that for any $T>t_0$, any $d\in\mathcal{L}_{p}^q$ and any $\xi_0\in\mathbb{R}^n$
	\begin{eqnarray}\label{practical eq}
	\int_{t_0}^{T}{\|z(s)\|^p}ds\leq\mu^p\int_{t_0}^{T}{\|d(s)\|^p}ds+\beta(\xi_0)+\eta.
	\end{eqnarray}
We assume plant and controller communicate aperiodically through a digital network and in an event-based manner. The event-triggered problem established in this paper relies on the emulation of the analog design and consists of two steps: 

First, we assume continuous data transmission between plant and a full information controller $u=\gamma(\xi)$, where $\gamma$ is locally Lipschitz-continuous. The resulting continuous-time plant is then given by
\begin{eqnarray}\label{eq sys_1}
\dot{\xi} ={f_c}(\xi,d),~~~ z=h(\xi,d), 
\end{eqnarray}
where ${f_c}(\xi,d):=f(\xi,d)+g(\xi)\gamma(\xi)$. It is then assumed that the controller renders the closed-loop (\ref{eq sys_1}) finite gain $\mathcal{L}_p$-stable with disturbance attenuation level $\mu$.

Second, the communication between plant and controller occurs at the instants belong to the set ${\{t_k:{k\in\mathbb{K}}\}}$, where $\mathbb{K}= \{0,1,2,\ldots,K\}$. The sampling sequence is a monotone increasing set, starting at $t_0$ and implicitly defined through a triggering rule. The actuator signal is held constant between events using a hold device $u(t)=u(t_k)$, $t\in[t_k,t_{k+1})$ where $t_{K+1}=\infty$ when $K$ is finite.
The proposed TC is continuously monitored and once it is satisfied, the updated state is forwarded to the controller which computes the new control signal and send it to the actuator instantaneously. More specifically, let $t_k$ be the most recent sampling instant and TC be satisfied at some $\varpi_{k+1}>t_{k}$. Then the new control signal applied through the actuator at $t_{k+1}=\varpi_{k+1}^+$ and hence $u(t_{k+1})=\gamma(\xi(t_{k+1}))$. Let $\varepsilon(t):=\xi(t_k)-\xi(t)$ represent the sampling error for $t\in[t_k,t_{k+1})$. $\varepsilon(t)$ is then a right-continuous signal with zero value at $t_{k}$. In our analysis we neglect practical issues such as transmission and computation delays, however, they can be readily addressed following the approach introduced in \cite{tabuada}. The resulting closed-loop ETS is then described by 
\begin{eqnarray}\label{eq sys}
\begin{cases}
\dot{\xi}={f}_s(\xi,\varepsilon,d),~~~
 z=h(\xi,d),\\  t_{k+1}=\inf \big\{t\in\mathbb{R}:t>t_k \wedge \Phi(t^-)=0\big\},
\end{cases}\hspace*{-1.6em}
\end{eqnarray} 
where ${f}_s(\xi,\varepsilon,d):=f(\xi,d)+g(\xi)\gamma(\xi+\varepsilon)$ and $\Phi(t)$ is the TC to be designed.

Assuming the existence of an $\mathcal{L}_p$-stabilizing controller for (\ref{eq sys_1}), our main interest is to design an event-triggered mechanism (ETM) that retains this input-output property of the network-free design for the resulting event-based plant; perhaps with a worse disturbance attenuation level.
The proposed ETM shall (1) exclude the Zeno behaviour and (2) serve as a general platform for TC design in event-based problems. 

\begin{rmk}\label{remark-future}
	Let us write the system dynamic as $\dot{\xi}=f(\xi,d)+\sum\nolimits_{j=1}^{m}g_j(\xi)u_j$ 
	where $f,g_j:\mathbb{R}^n\rightarrow \mathbb{R}^n$, $g=(g_1,\ldots,g_m)$ and $\gamma=(\gamma_1,\ldots,\gamma_m)^{\mathsf{T}}$. Recent studies \cite{decentralized-Mazo,Wang-distributed} consider the interesting scenarios that controller is (i) \text{distributed}: $u_j(t)=\gamma_j(\hat{\xi}(t))$, $j=1,\ldots, m$, and (ii) \text{decentralized}: $\hat{\xi}_i(t)=\xi_i(t^i_{r_i})$, $t\in[t^i_{r_i},t^{i}_{r_{i+1}})$, $i=1,\ldots,n$,
	implying that the distributed controllers $u_j=\gamma_j(\hat{\xi})$ utilizes the full state vector measured through $n$ independent sensors to construct the control signal. Studying the results of the current paper under these assumptions is left as a future work.
\end{rmk}

\section{Event-triggering mechanism} \label{section_ETM}
In this section, we introduce a general structure to design $\Phi$ so that ETS (\ref{eq sys}) has $\mathcal{L}_p$-gain $\leq \mu_d$. 
Consider the following TC structure:
\begin{eqnarray}\label{trig_cond}
\Phi(t) := \bar{k} \varphi(\xi(t),\varepsilon(t)) -\textstyle{\sum} 
_{i=1}^{2}k_i\phi_i(t)=0
\end{eqnarray}
where $k_1, k_2 >0$, and the  dynamic variables ${\phi_1}$, ${\phi_2}$ and function $\varphi$ are to be designed. Furthermore, we will assume $\bar{k}=1$ unless otherwise stated.
We start with designing $\varphi$, for which the following assumption is required. 
\begin{assumption}\label{assumption_ISS}
	There exist positive definite, radially unbounded functions $V_s$, $V_c$, positive constants $\mu$, $c_i$, $\bar{c}_i$ $i\in\{1,2,3\}$ and some $p\in[1,\infty)$ satisfying \vspace*{.1em}
	\begin{itemize}
		\item [(i)] $\nabla{V_s}(\xi)\cdot f_s(\xi,\varepsilon,d)\leq -c_1\|\xi\|^p+c_2\|\varepsilon\|^p+c_3\|d\|^p$,\vspace*{.1em}
		\item [(ii)] $\nabla{V_c}(\xi)\cdot f_c(\xi,d)\leq\mu^p\|d\|^p-\|z\|^p$,\vspace*{.1em}
		\item [(iii)] $V_s(\xi)\leq \bar{c}_1\|\xi\|^p$, $V_c(\xi)\leq\bar{c}_2\|\xi\|^p$, $\|\nabla V_c(\xi)\|\leq \bar{c}_3\|\xi\|^{p-1}$.
	\end{itemize}
\end{assumption}
\begin{rmk}
	Assumption \ref{assumption_ISS}(i) implies that system (\ref{eq sys}) is ISS with respect to the inputs $\varepsilon$, $d$. Also Assumption \ref{assumption_ISS}(ii) implies that $u=\gamma(\xi)$ renders the continuous-time system (\ref{eq sys_1}) finite gain $\mathcal{L}_p$-stable with $\mathcal{L}_p$-gain $\leq \mu$.
\end{rmk}
The function $\varphi$ is assumed to have the following form
\begin{eqnarray} \label{varphi}
\varphi(\xi,\varepsilon)  = \varphi_1(\xi)+\varphi_2(\varepsilon)+\varphi_3(\xi,\varepsilon),
\end{eqnarray}
where $\varphi_1(r)  =  - c_1\sigma\|r\|^p$, $\varphi_2(r) = c_2\|r\|^p$,
\begin{eqnarray}\label{phi_3}
\varphi_3(r,s)  = \nabla V_{c,\lambda}(r)\cdot g(r)(\gamma(r+s)-\gamma(r)) 
\end{eqnarray} 
and $\sigma< 1$, $p\in[1,\infty)$, $V_{c,\lambda}(r)=\lambda V_c(r)$ for some $\lambda\in\mathbb{R}^+$. 
We then continue with the design of $\phi_1$, $\phi_2$; dynamic parameters serve to enlarge the inter-event times and guarantee the event-separation property for ETS (\ref{eq sys}). Consider the equations below for $t\in[t_k,t_{k+1})$
\begin{subequations}\label{phi_eta_Ti}
\begin{eqnarray}
\frac{d}{dt}\begin{pmatrix}
{\phi}_1\\{\phi}_2
\end{pmatrix}+\begin{pmatrix}
\alpha_1({\phi_1})-k_2{\phi_2}\\ \alpha_2(\phi_2)
\end{pmatrix}=\begin{pmatrix}
-\varphi\\ \bar{\varphi}
\end{pmatrix},\label{phi}~~~~~
\\
{\bar{\varphi}(t)}=
\begin{cases}
\alpha_2(\bar{\delta}),~~~~~~~~~~~~~~~~t\in[t_k,\hat{t}_k),\\
\dot{\delta}_k(t)+\alpha_2(\delta_k(t)),~~ t\in[\hat{t}_k,t_{k+1}),
\end{cases}\label{Delta}
\end{eqnarray}
\end{subequations}
where $\bar{\delta}$ is a positive constant and $\delta_k$ is a positive, bounded and piecewise differentiable function defined over $[\hat{t}_k,t_{k+1})$ and satisfies $\sum_{k} \int_{\hat{t}_k}^{t_{k+1}} \delta_k(\tau)d\tau\leq \theta_1$ for some positive $\theta_1$. Also $\hat{t}_k=t_k+\hat{\tau}$, where $\hat{\tau}$ is a positive parameter and will be designed in the sequel. Note that function $\bar{\varphi}$ is defined such that $\bar{\delta}$ (resp., $\delta_k(t)$) is a solution of $\phi_2$ in (\ref{phi}) over $[t_k,\hat{t}_k)$ (resp., $[\hat{t}_k,t_{k+1})$). 
Moreover $\alpha_2$ is an arbitrary class-$\mathcal{K}_\infty$ function and $\alpha_1 \in \mathcal{K}_\infty$ is designed based on the following assumption.
\begin{assumption}\label{ass_alpha}
	$\alpha_1(r)\geq \nu r$ where $\nu=c_1({1-\sigma})/({\bar{c}_1+\bar{c}_2})$.
\end{assumption}
To solve (\ref{phi}), (\ref{Delta}) the following initial values are assumed
\begin{eqnarray}\label{IC}
{\phi_1}(t_k)=r_k,~{\phi_1}(\hat{t}_k)={\hat{r}_k},~
\phi_2(t_k)=s_k,~\phi_2(\hat{t}_k)={\hat{s}_k},
\end{eqnarray}
where $r_k$, $\hat{r}_k$, $s_k$, $\hat{s}_k$ are non-negative real numbers and are designed based on the following assumption.
\begin{assumption}\label{ass_impose}
	$r_k$ and $\hat{r}_k$ are chosen from sequences with convergent series, {\it i.e.,} there exist finite numbers ${\theta_2},\theta_3\in\mathbb{R}^+$ so that $\sum\nolimits_{k}r_k \leq {\theta_2}$, $\sum\nolimits_{k}\hat{r}_k \leq \theta_3$. Moreover, $s_k$ and $\hat{s}_k$ satisfy $s_k\geq \bar{\delta}$ and $\hat{s}_k= \delta_k(\hat{t}_k)$. 
\end{assumption}
Dynamic rules have been previously studied in  \cite{Postuyan_GFW,Girard}. The variable ${\phi_1}$ in (\ref{trig_cond}) which satisfies the differential equation (\ref{phi}), plays the role of dynamic parameter introduced in the above references. In the present work, we introduce an additional dynamic variable ${\phi_2}$; while both $\phi_1$, $\phi_2$ serve to extend the inter-event times, $\phi_2$ plays the fundamental role of guaranteeing event separation property for ETS (\ref{eq sys}). In particular, $\phi_2$ introduces two design parameter, $\bar{\delta}$, $\delta_k$. The former, is inspired by the idea of mixing triggering condition in \cite{event-separation} and is intended to rule out Zeno-behaviour. The latter, on the other hand, is a generalization of time-decaying thresholds and is mainly used to move from practical asymptotic stability (under constant threshold) to asymptotic stability, \cite{decentralized-Mazo,Seyboth}. Designing these parameters together with $\hat{t}_k$ which decides the duration over which each parameter is effective, lead to non-trivial challenges that have to be carefully carried out. Based on the above observations, the proposed TC (\ref{trig_cond}) unifies (i) dynamic TC \cite{Girard} through the presence of $\phi_1$, (ii) time-varying threshold \cite{decentralized-Mazo} through presence of $\delta_k$ and (iii) mixed triggering \cite{event-separation} through presence of $\bar{\delta}$.
\begin{proposition}\label{prop_phi_positive}
	Under TC (\ref{trig_cond}) and Assumption \ref{ass_impose}, ${\phi_1}(t),{\phi_2}(t)\geq0$ for all $t\geq t_0$. In detail, ${\phi_2}(t)\geq \bar{\delta}$ for $t\in[t_k,\hat{t}_k)$, ${\phi_2}(t)=\delta_k(t)$ for $t\in[\hat{t}_k,t_{k+1})$.
\end{proposition}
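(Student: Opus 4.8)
The plan is to decouple the two dynamic variables. The $\phi_2$-equation in (\ref{phi_eta_Ti}) does not involve $\phi_1$, so I would first pin down $\phi_2$ by elementary scalar comparison and only then substitute the triggering rule (\ref{trig_cond}) itself into the $\phi_1$-equation. Since both variables are re-initialized at $\hat t_k$, I would argue separately on $[t_k,\hat t_k)$ and on $[\hat t_k,t_{k+1})$; on each the relevant initial value is nonnegative, with in addition $s_k\ge\bar\delta$ on the first and $\hat s_k=\delta_k(\hat t_k)$ on the second, by Assumption~\ref{ass_impose}.

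\emph{Bounding $\phi_2$.} On $[t_k,\hat t_k)$ the $\phi_2$-equation in (\ref{phi})--(\ref{Delta}) is $\dot\phi_2=\alpha_2(\bar\delta)-\alpha_2(\phi_2)$, a scalar autonomous equation with equilibrium $\bar\delta$. Writing $W:=\phi_2-\bar\delta$, strict monotonicity of $\alpha_2\in\mathcal{K}_\infty$ gives $\dot W=-\big(\alpha_2(\bar\delta+W)-\alpha_2(\bar\delta)\big)$, hence $W\dot W\le0$ and $\frac{d}{dt}(\max\{-W,0\})^2\le0$; since $W(t_k)=s_k-\bar\delta\ge0$ this forces $W\ge0$, i.e.\ $\phi_2\ge\bar\delta$ there. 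On $[\hat t_k,t_{k+1})$ the same equation reads $\dot\phi_2=\dot\delta_k+\alpha_2(\delta_k)-\alpha_2(\phi_2)$, of which $\delta_k(\cdot)$ is a solution; with $W:=\phi_2-\delta_k$ one has $\frac{d}{dt}W^2=-2W\big(\alpha_2(\delta_k+W)-\alpha_2(\delta_k)\big)\le0$, and $W(\hat t_k)=\hat s_k-\delta_k(\hat t_k)=0$ then gives $\phi_2\equiv\delta_k$. Positivity of $\phi_2$ on $[t_0,\infty)$ is then immediate from $\bar\delta>0$ and $\delta_k>0$. I would argue through the sign of $W\dot W$ and the negative-part function rather than through uniqueness of solutions, since $\alpha_2$ is only assumed class-$\mathcal{K}_\infty$, not locally Lipschitz.

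\emph{Bounding $\phi_1$.} On a sub-interval over which no event has yet been triggered, the monitored quantity obeys $\Phi(t)\le0$, i.e.\ $\varphi(\xi(t),\varepsilon(t))\le k_1\phi_1(t)+k_2\phi_2(t)$. Substituting into the $\phi_1$-row of (\ref{phi}) and cancelling the $k_2\phi_2$ terms gives $\dot\phi_1=-\alpha_1(\phi_1)+k_2\phi_2-\varphi\ge-\alpha_1(\phi_1)-k_1\phi_1$, so whenever $\phi_1(t)=0$ we get $\dot\phi_1(t)\ge-\alpha_1(0)-k_1\cdot0=0$ because $\alpha_1\in\mathcal{K}_\infty$. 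Hence $[0,\infty)$ is forward invariant for $\phi_1$ along that sub-interval, and as $\phi_1$ enters it at $r_k\ge0$ (resp.\ $\hat r_k\ge0$) we conclude $\phi_1(t)\ge0$, hence for all $t\ge t_0$.

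\emph{Main obstacle.} The analysis is a short chain of scalar comparison and invariance facts; the delicate part is the bookkeeping around the reset at $\hat t_k$. One must check that the bound $\varphi\le k_1\phi_1+k_2\phi_2$ truly holds throughout $[t_k,t_{k+1})$, which follows from $\Phi(t_k)<0$ (immediate since $\varepsilon(t_k)=0$ reduces $\varphi$ to $\varphi_1(\xi(t_k))=-c_1\sigma\|\xi(t_k)\|^p\le0$ while $k_1\phi_1(t_k)+k_2\phi_2(t_k)\ge k_2\bar\delta>0$) together with continuity of $\Phi$ on each of $[t_k,\hat t_k)$ and $[\hat t_k,t_{k+1})$, provided the re-initialization at $\hat t_k$ does not carry $\Phi$ above zero---a point implicit in the construction of the ETS that I would make explicit. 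The only other nuisance is the possible absence of local Lipschitz continuity of $\alpha_1,\alpha_2$, already handled above by the energy/negative-part estimates.
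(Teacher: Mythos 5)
Your proof is correct and follows essentially the same route as the paper's appendix proof: on each of $[t_k,\hat t_k)$ and $[\hat t_k,t_{k+1})$ you use that $\bar\delta$, resp. $\delta_k$, solves the $\phi_2$-dynamics in (\ref{phi})--(\ref{Delta}) together with $s_k\ge\bar\delta$, $\hat s_k=\delta_k(\hat t_k)$, and for $\phi_1$ you use the between-event inequality $\varphi\le k_1\phi_1+k_2\phi_2$ to obtain $\dot\phi_1\ge-\alpha_1(\phi_1)-k_1\phi_1$ with nonnegative resets $r_k,\hat r_k$. The only difference is technical: where the paper compares with the zero solution of $\dot\phi_1+\alpha_1(\phi_1)+k_1\phi_1=0$ and implicitly uses uniqueness for $\phi_2$, you replace these steps with monotonicity-based negative-part/energy estimates and a Nagumo-type invariance argument (and explicitly flag that $\Phi\le0$ must persist across the reset at $\hat t_k$), which tightens the same argument rather than constituting a different approach.
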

Proposition \ref{prop_phi_positive}, whose proof is provided in the Appendix, illustrates the previous claim that ${\phi_1}$, ${\phi_2}$ enlarge the inter-event times. In fact, in absence of $\phi_1$, $\phi_2$ triggering occurs when $\varphi(\xi,\varepsilon)=0$. However, the positiveness of ${\phi_1}$, ${\phi_2}$ postpones the triggering to occur when $\varphi(\xi,\varepsilon)=k_1\phi_1+k_2\phi_2$.
To finish the design, it remains to define $\hat{\tau}$. Let us start with the following lemma whose proof is given in Appendix. 
\begin{lem}\label{lem_state_bound}
	Under Assumptions \ref{assumption_ISS}-\ref{ass_impose} and if the control signal is updated under the triggering rule (\ref{trig_cond}), all the trajectories of the ETS (\ref{eq sys}) starting from $\mathscr{B}_{\rho}$ will remain in $\mathscr{B}_{\bar{\rho}}$, where
	\begin{eqnarray}
	\bar{\rho}=\displaystyle\max\Big{\{}\|\xi\|:V_s(\xi)+V_{c,\lambda}(\xi)\leq V_s(\xi_0)+V_{c,\lambda}(\xi_0)+~~~~\nonumber\\ ~\frac{1}{\nu}( \lambda\mu_d^p{\|d\|}_\infty^p+{k_2\|{\phi_2}\|}_{\infty})+{\theta_2}+\theta_3, \xi,\xi_0\in\mathbb{R}^n,\|\xi_0\|\leq \rho\Big{\}}.\nonumber
	\end{eqnarray}
\end{lem}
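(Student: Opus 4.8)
The plan is to run an ISS‑type Lyapunov estimate on the composite function $W(\xi):=V_s(\xi)+V_{c,\lambda}(\xi)$, augmented by the dynamic variable $\phi_1$, and to show that $W+\phi_1$ never exceeds the quantity that defines $\bar\rho$. First I would differentiate $W$ along a flow interval of (\ref{eq sys}). Writing $f_s(\xi,\varepsilon,d)=f_c(\xi,d)+g(\xi)\big(\gamma(\xi+\varepsilon)-\gamma(\xi)\big)$ and using $\nabla V_{c,\lambda}=\lambda\nabla V_c$, the term $\nabla V_{c,\lambda}(\xi)\cdot f_s$ decomposes as $\lambda\nabla V_c(\xi)\cdot f_c(\xi,d)$ plus exactly $\varphi_3(\xi,\varepsilon)$ from (\ref{phi_3}). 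Feeding this into Assumption~\ref{assumption_ISS}(i)--(ii) gives $\dot W\le -c_1\|\xi\|^p+c_2\|\varepsilon\|^p+\varphi_3(\xi,\varepsilon)+(c_3+\lambda\mu^p)\|d\|^p-\lambda\|z\|^p$, and since $\varphi=-c_1\sigma\|\xi\|^p+c_2\|\varepsilon\|^p+\varphi_3$ by (\ref{varphi}) this collapses to $\dot W\le -c_1(1-\sigma)\|\xi\|^p+\varphi(\xi,\varepsilon)+(c_3+\lambda\mu^p)\|d\|^p-\lambda\|z\|^p$. Note that this step uses only the structure of $\varphi$, not the triggering rule.

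Next I would remove $\varphi$ with the help of the $\phi_1$-dynamics (\ref{phi}): on each sub-interval $[t_k,\hat t_k)$ and $[\hat t_k,t_{k+1})$ one has $\varphi=-\dot\phi_1-\alpha_1(\phi_1)+k_2\phi_2$, hence
\[
\frac{d}{dt}\big(W+\phi_1\big)\le -c_1(1-\sigma)\|\xi\|^p-\alpha_1(\phi_1)+k_2\phi_2+(c_3+\lambda\mu^p)\|d\|^p-\lambda\|z\|^p .
\]
I would then use Assumption~\ref{assumption_ISS}(iii) and the definition of $\nu$ to bound $c_1(1-\sigma)\|\xi\|^p$ below by $\nu W$, Assumption~\ref{ass_alpha} to bound $\alpha_1(\phi_1)$ below by $\nu\phi_1$, discard the non-positive term $-\lambda\|z\|^p$, and use $0\le\phi_2(t)\le\|\phi_2\|_\infty$ (Proposition~\ref{prop_phi_positive}) and $\|d(t)\|^p\le\|d\|_\infty^p$, obtaining on every flow interval
\[
\frac{d}{dt}\big(W+\phi_1\big)\le -\nu\big(W+\phi_1\big)+k_2\|\phi_2\|_\infty+(c_3+\lambda\mu^p)\|d\|_\infty^p .
\]

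Finally I would integrate this across the event times. The map $t\mapsto W(\xi(t))$ is continuous, whereas $\phi_1$ is reset to $r_k$ at each $t_k$ and to $\hat r_k$ at each $\hat t_k$; since $\phi_1(\cdot^-)\ge0$ by Proposition~\ref{prop_phi_positive}, the positive part of each reset of $W+\phi_1$ is at most $r_k$ (resp.\ $\hat r_k$), and Assumption~\ref{ass_impose} makes these summable with totals bounded by $\theta_2$ and $\theta_3$. A Gr\"onwall-type comparison argument — multiply by $e^{\nu t}$, integrate over the flow pieces, and add the reset contributions — then gives, for all $t\ge t_0$,
\[
(W+\phi_1)(t)\le e^{-\nu(t-t_0)}(W+\phi_1)(t_0)+\frac{1}{\nu}\big(k_2\|\phi_2\|_\infty+(c_3+\lambda\mu^p)\|d\|_\infty^p\big)+\theta_2+\theta_3 .
\]
Using $\phi_1\ge0$ on the left, absorbing $\phi_1(t_0)=r_0\le\theta_2$ on the right, and the defining relation $c_3+\lambda\mu^p\le\lambda\mu_d^p$ of $\mu_d$, this becomes $V_s(\xi(t))+V_{c,\lambda}(\xi(t))\le V_s(\xi_0)+V_{c,\lambda}(\xi_0)+\frac{1}{\nu}\big(\lambda\mu_d^p\|d\|_\infty^p+k_2\|\phi_2\|_\infty\big)+\theta_2+\theta_3$ for every $t\ge t_0$ and every $\|\xi_0\|\le\rho$. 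Since $V_s+V_{c,\lambda}$ is radially unbounded, the associated sublevel set is compact and $\bar\rho$ is precisely its largest $\|\xi\|$-value, so $\xi(t)\in\mathscr{B}_{\bar\rho}$, which is the claim.

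The step I expect to be the main obstacle is the last one: making the Gr\"onwall-type estimate rigorous when $\phi_1$ (and $\phi_2$) may be reset infinitely often in finite time — this is exactly where the summability imposed in Assumption~\ref{ass_impose} and the bound $\sum_k\int_{\hat t_k}^{t_{k+1}}\delta_k\,d\tau\le\theta_1$ on $\delta_k$ are indispensable, and one must also be sure $\|d\|_\infty$ and $\|\phi_2\|_\infty$ are finite so that the right-hand side (hence $\bar\rho$) is well defined. A lesser point requiring care is the constant bookkeeping in the second step, namely lower-bounding $c_1(1-\sigma)\|\xi\|^p$ by $\nu(V_s+V_{c,\lambda})$ through Assumption~\ref{assumption_ISS}(iii) together with the interplay of $\lambda$, $\nu$, $\bar c_1$ and $\bar c_2$.
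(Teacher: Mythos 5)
Your proof is correct and follows essentially the same route as the paper: the same composite function $V_s+V_{c,\lambda}$, the same elimination of $\varphi$ through the $\phi_1$-dynamics (\ref{phi}), Assumptions \ref{assumption_ISS}(iii) and \ref{ass_alpha} to produce the decay rate $\nu$, the bound $c_3+\lambda\mu^p\le\lambda\mu_d^p$, and an $e^{\nu t}$-weighted Gr\"onwall estimate in which the summable resets $r_k,\hat r_k$ of $\phi_1$ are absorbed into $\theta_2+\theta_3$. The only cosmetic difference is that you track $W+\phi_1$ as a hybrid Lyapunov function with jumps, whereas the paper keeps $-\dot\phi_1$ on the right-hand side and disposes of it via an integration-by-parts bound (its Proposition \ref{lem_phi-1}); the two bookkeeping devices are equivalent.
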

Since  ${\|{\phi_2}\|}_{\infty}$ is limited by $\smash{\displaystyle{\max}\{s_k,{\|\delta_k\|}_{\infty}:k\in\mathbb{K}\}}$ and hence is bounded, Lemma \ref{lem_state_bound} suggests that the trajectories of the ETS (\ref{eq sys}) are bounded by a non-decreasing function of $\|\xi_0\|$ and ${\|d\|}_{\infty}$. Next lemma employs the Lipschitz property of $f$, $g$, $\gamma$ to provide an upper bound on the norm of state dynamics.
\begin{lem}\label{lem_lip}
	With the same conditions as in Lemma \ref{lem_state_bound}, there exist $\lambda_i=\lambda_i({\|\xi_0\|},{\|d\|}_\infty)$, $i\in\{1,2,3\}$, non-decreasing on their arguments, so that 
	\begin{eqnarray}
	\| \dot{\xi} \| \leq \lambda_1\|\xi\|+\lambda_2\|\varepsilon\|+\lambda_3\|d\|.\nonumber
	\end{eqnarray}
\end{lem}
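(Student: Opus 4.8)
The plan is to bound $\|\dot\xi\|$ by invoking the explicit form $\dot\xi = f_s(\xi,\varepsilon,d) = f(\xi,d) + g(\xi)\gamma(\xi+\varepsilon)$ and then applying the local Lipschitz hypotheses on $f$, $g$, $\gamma$. The key point that makes this work globally (i.e., with constants depending only on $\|\xi_0\|$ and $\|d\|_\infty$, not on a fixed compact set chosen in advance) is Lemma \ref{lem_state_bound}: the trajectory stays in $\mathscr{B}_{\bar\rho}$ with $\bar\rho$ a non-decreasing function of $\|\xi_0\|$ and $\|d\|_\infty$. So I would first fix the ball $\mathscr{B}_{\bar\rho}$ and work entirely within it. On this ball, since $\varepsilon(t) = \xi(t_k) - \xi(t)$ with both $\xi(t_k), \xi(t) \in \mathscr{B}_{\bar\rho}$, we also have $\|\varepsilon\| \le 2\bar\rho$ and $\xi + \varepsilon = \xi(t_k) \in \mathscr{B}_{\bar\rho}$, so every argument fed to $f$, $g$, $\gamma$ lives in a compact set of radius at most $2\bar\rho$ around the origin.

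Next I would decompose and estimate term by term. Write
\begin{eqnarray}
\|\dot\xi\| \le \|f(\xi,d) - f(0,d)\| + \|f(0,d) - f(0,0)\| + \|g(\xi)\|\,\|\gamma(\xi+\varepsilon)\|.\nonumber
\end{eqnarray}
For the first term, local Lipschitz continuity of $f$ on the compact set gives a constant $L_f = L_f(\bar\rho)$ with $\|f(\xi,d)-f(0,d)\| \le L_f\|\xi\|$ (uniformly in $d$ ranging over the ball of radius $\|d\|_\infty$, using that $f$ is locally Lipschitz jointly and $d$ stays in a compact set). The second term is $\le L_f^{d}\|d\| \le L_f^{d}\|d\|$ since $f(0,0)=0$. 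For the third term, $\|g(\xi)\| \le \|g(0)\| + L_g\|\xi\| \le \bar g$ on the ball (a constant $\bar g = \bar g(\bar\rho)$), and $\|\gamma(\xi+\varepsilon)\| \le \|\gamma(0)\| + L_\gamma\|\xi+\varepsilon\| \le \|\gamma(0)\| + L_\gamma\|\xi\| + L_\gamma\|\varepsilon\|$ using $\gamma(0)$ bounded (in fact the controller need not vanish at zero, but $\|\gamma(0)\|$ is just a constant absorbed into the bound, or one can keep $\gamma(0)$ if the closed loop forces it). Collecting the coefficients of $\|\xi\|$, $\|\varepsilon\|$, $\|d\|$ yields
\begin{eqnarray}
\lambda_1 = L_f + L_g(\|\gamma(0)\| + L_\gamma\, 3\bar\rho) + \bar g L_\gamma,\quad
\lambda_2 = \bar g L_\gamma,\quad
\lambda_3 = L_f^{d},\nonumber
\end{eqnarray}
modulo rearrangement — the constant terms that don't multiply $\|\xi\|$, $\|\varepsilon\|$, or $\|d\|$ (e.g. $\bar g\|\gamma(0)\|$) can be folded into $\lambda_1\|\xi\|$ only on the ball away from the origin, so more carefully one either assumes $\gamma(0)=0$ (natural here since $\gamma$ stabilizes and $f(0,0)=0$) or states the bound as affine with a constant offset; I would adopt $\gamma(0)=0$, which is consistent with $\xi=0$ being an equilibrium of the closed loop. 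Since $\bar\rho$, $L_f$, $L_g$, $L_\gamma$ are all non-decreasing in $\bar\rho$ and $\bar\rho$ is non-decreasing in $(\|\xi_0\|, \|d\|_\infty)$, each $\lambda_i$ is non-decreasing in its arguments, as claimed.

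The main obstacle is bookkeeping rather than conceptual: making precise that the Lipschitz constants can be taken uniform over the relevant compact set and uniform in $d$ over $\{\|d\| \le \|d\|_\infty\}$, and handling the constant ($\|\xi\|$-independent) pieces cleanly so that the final estimate genuinely has the stated homogeneous affine form $\lambda_1\|\xi\| + \lambda_2\|\varepsilon\| + \lambda_3\|d\|$ with no leftover additive constant. This is resolved by the $\gamma(0)=0$ (equivalently $f_s(0,0,0)=0$) normalization together with $f(0,0)=0$, after which every term is controlled by one of the three norms and the lemma follows.
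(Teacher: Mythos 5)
Your proposal is correct and follows essentially the same route as the paper's own (one-line) sketch: apply the local Lipschitz properties of $f$, $g$, $\gamma$ on the compact set $\mathscr{B}_{\bar\rho}$ furnished by Lemma \ref{lem_state_bound}, so that the resulting constants are non-decreasing functions of $\bar\rho$ and hence of $\|\xi_0\|$, $\|d\|_\infty$. Your explicit handling of the additive constant via the normalization $\gamma(0)=0$ (equivalently $f_s(0,0,0)=0$, which is implicit in the paper's equilibrium assumption) is in fact more careful than the paper's sketch, which silently sets the reference point to the origin in its incremental bound.
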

\begin{proof}[Sketch of the proof]
	One can apply the Lipschitz property of functions $f$, $g$, $\gamma$ to get $ \|\dot{\xi} -\dot{\tilde{\xi}}\|  \leq \lambda_1\|\xi-\tilde{\xi}\|+\lambda_2\|\varepsilon-\tilde{\varepsilon}\|+\lambda_3\|d-\tilde{d}\|$
	where $\lambda_i$'s are functions of ${\|d\|}_\infty$ and $\bar{\rho}$. The result then follows by applying Lemma \ref{lem_state_bound}. 
\end{proof}
\begin{rmk}
	As Lemma \ref{lem_lip} suggests, since the Lipschitz properties of functions $f$, $g$ and $\gamma$ are only \emph{local}, the Lipschitz coefficients $\lambda_i$ are bounded provided $\|\xi_0\|<\infty$, ${\|d\|}_{\infty}<\infty$.
\end{rmk} 
Let us define for $i\in\{1,3\}$
	\begin{eqnarray}
	\tau_i:=\sup\Big{\{}t\in\mathbb{R}^+_0:\lambda_i^p \psi(t,\lambda_2)<\frac{B_i}{c}  \Big{\}},\nonumber
	\end{eqnarray}
	where $B_1=c_1\sigma-\frac{(\bar{c}_3\lambda)^q}{q}$, $B_3=\lambda(\mu_d^p-\mu^p)-c_3$, $c=c_2+\frac{\lambda_2^p}{p}$ and
	$\psi(t,\lambda_2)=\frac{2^{2p}(p-1)^{p-1}}{\lambda_2^{p}p^{p}}(e^{\frac{\lambda_2p}{2(p-1)}t}-1)^{p-1}(e^{\frac{\lambda_2p}{2}t}-1).$ 
	Then, we define $\hat{\tau}$ as
\begin{eqnarray}\label{tau_hat}
\hat{\tau}=\min\{\tau_1,\tau_3\}.
\end{eqnarray}
Later in Lemma \ref{semi_global} we will see that $\hat{\tau}>0$ serves to isolate of triggering instants. Moreover, $\tau_1$ (resp., $\tau_3$) is the elapsed time since the most recent triggering instant so that sampling error grows without violating stability (resp., desired $\mathcal{L}_p$ bound) of the ETS (\ref{eq sys}) (see the proof of Theorem \ref{thm_main}). 
	In addition, the definitions of $\hat{\tau}$, $\tau_3$ infers that $\mu_d^p-\mu^p$ is lower bounded by $\psi(\hat{\tau},\lambda_2)$. Since as shown in Section \ref{compare}, $\hat{\tau}$ is a candidate for dwell-period, the dwell-time method places a lower bound on the deviation of continuous-time and event-based performances. This restriction does not exist in our approach. 

\begin{rmk}\label{lambda}
To design $\lambda$ one has to consider the restriction of having a positive $\hat{\tau}$. 
$\hat{\tau}>0$ necessitates $\tau_1$, $\tau_3$ and hence $B_1$, $B_3$ to be positive. This gives the restriction on $\lambda$ as $\lambda < {\bar{c}_3^{-1}}{{(c_1\sigma q)}^{\frac{1}{q}}}$ and $\lambda>{c_3}({\mu_d^p-\mu^p})^{-1}$. The later condition implies $\mu_d>\mu$, {\it i.e.,} the $\mathcal{L}_p$-stability of ETS (\ref{eq sys}) is achieved at the expense of a larger rejection level. However, to minimize $\mu_d$, we may choose $c_3$ small enough by \emph{scaling} Lyapunov function $V_s$ in Assumption \ref{assumption_ISS} (refer to example section for more details). Obviously, one has to replace $c_i$, $i\in\{1,2,3\}$ by the corresponding scaled values in all of the discussions. 
\end{rmk}
\section{Main results}\label{preliminary result}
\subsection{Uniform isolation of triggering instants}
%
One of the difficulties encountered in event-based control systems is undesirable Zeno behaviour which happens when an infinite number of triggerings occur over a finite interval. This is even more challenging when the system of interest is exposed to exogenous disturbances or sensor noise, since in this case the sampling error is also driven by the disturbance/noise. As an example, while Zeno behavior is excluded in \cite{tabuada} for disturbance free systems, the same does not necessarily hold in presence of disturbance (see \cite{event-separation} for further discussion).
In the sequel, we show that under TC (\ref{trig_cond}), the ETS (\ref{eq sys}) satisfies the following \emph{robust event-separation} property defined in \cite{event-separation}.
\begin{deff}\label{def-separation}
	Let $\tau_{m} =\inf\{t_{k+1}-t_k : k\in\mathbb{K}\}$ be the MIET. ETS (\ref{eq sys}) has the robust semi-global event-separation property if there exists $\epsilon\in \mathbb{R}^+$ so that for any compact set $\Xi\subset\mathbb{R}^n$, $\inf\{\tau_m : \xi_0\in\Xi ,{\|d\|}_\infty\leq \epsilon \}>0$.
\end{deff}
According to Definition \ref{def-separation}, an event-based system has the robust semi-global event-separation property if the sequence of sampling times 
$\{t_k:k\in\mathbb{K}\}$ is a uniformly isolated set provided that $\xi_0\in\Xi$ and ${\|d\|}_\infty\leq \epsilon$.

\begin{lem}\label{semi_global}
	Under Assumptions \ref{assumption_ISS}, \ref{ass_alpha}, \ref{ass_impose} and ETM (\ref{trig_cond})-(\ref{tau_hat}), the ETS (\ref{eq sys}) has the robust semi-global event-separation property. In detail,
	\begin{eqnarray}
	\tau_{m}=\displaystyle\min\{\tau^*(1),\hat{\tau}\},\nonumber
	\end{eqnarray}
	where $m_1=(\frac{B_1}{c})^{\frac{1}{p}}$, $m_2=(\frac{k_2 \bar{\delta}}{c})^{\frac{1}{p}}$, $\kappa{=}\max\Big{\{}\frac{2\lambda_1}{m_1},\frac{2\lambda_3 \epsilon}{m_2}\Big{\}}$ and
	\begin{eqnarray}\label{chi}
	\tau^*(\chi)=\begin{cases} 
	\frac{1}{\lambda_2-m_1\frac{\kappa}{2}}\ln(\frac{\kappa+\lambda_2\chi	}{\kappa(1+m_1\frac{\chi}{2})}),~~~~\kappa\neq\frac{2\lambda_2}{m_1},\\
	\frac{m_1{\chi}}{\lambda_2(2+m_1{\chi})},~~~~~~~~~~~~~~~~~~\kappa=\frac{2\lambda_2}{m_1}.
	\end{cases} 
	\end{eqnarray}
\end{lem}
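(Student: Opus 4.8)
\textbf{Proof plan for Lemma~\ref{semi_global}.}
The plan is to show that between any two consecutive triggering instants $t_k$ and $t_{k+1}$ a strictly positive amount of time must elapse, with a bound that depends only on the compact set $\Xi$ (through $\|\xi_0\|$) and on the disturbance bound $\epsilon$. The key observation is the decomposition of the inter-event interval into the two phases $[t_k,\hat t_k)$ and $[\hat t_k,t_{k+1})$ already built into the design. On $[t_k,\hat t_k)$ no event can occur because $\hat\tau>0$ is fixed and, by definition~\eqref{tau_hat} of $\hat\tau$ through $\tau_1,\tau_3$, the sampling error cannot grow enough to make $\Phi$ hit zero before $\hat t_k$; this is exactly the role flagged after \eqref{tau_hat}. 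So the real work is to lower-bound $t_{k+1}-\hat t_k$, i.e.\ to show that once we are past the dwell portion, $\varphi(\xi,\varepsilon)$ still needs a uniformly positive time to climb up to $k_1\phi_1+k_2\phi_2$, and here I would simply use $k_1\phi_1+k_2\phi_2 \ge k_2\phi_2 = k_2\delta_k(t)\ge 0$ together with the $\bar\delta$-floor from Proposition~\ref{prop_phi_positive}; the clean bound comes from the $\bar\delta$ term, which is why $m_2=(k_2\bar\delta/c)^{1/p}$ appears.

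The central estimate is a differential-inequality comparison for the sampling error. Starting from $\varepsilon(t_k)=0$ (right-continuous, reset to zero at each event), and using Lemma~\ref{lem_lip}, $\|\dot\varepsilon\| = \|\dot\xi\|\le \lambda_1\|\xi\|+\lambda_2\|\varepsilon\|+\lambda_3\|d\|$. I would bound $\|\xi\|$ by $\bar\rho$ via Lemma~\ref{lem_state_bound} (so $\lambda_1\|\xi\|\le \lambda_1\bar\rho$, a constant depending on $\|\xi_0\|,\epsilon$) and $\|d\|\le\epsilon$, obtaining $\tfrac{d}{dt}\|\varepsilon\|\le a+\lambda_2\|\varepsilon\|$ with $a$ a constant; more sharply, to get the ratio $\|\varepsilon\|^p/(\ldots)$ that drives the trigger I would track the quantity $y:=\|\varepsilon\|/\|\xi\|$-type ratio or directly estimate the time for $c\|\varepsilon\|^p$ to reach $B_1/c\cdot\|\xi\|^p$-style thresholds, which after the $\bar\rho$-substitution becomes: the time for $\|\varepsilon\|$ to reach $m_1\|\xi\|$ (stability-type threshold, governed by $\varphi_1+\varphi_2$ and the $c_1\sigma$ margin) or $m_2$ (the $\bar\delta$-threshold). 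Solving $\dot w \le \lambda_2 w + \tfrac{\kappa}{2} m_1$ (the linear ODE whose solution gives the $\ln$ in \eqref{chi}) from $w(0)=0$ up to $w=m_1\chi$ yields precisely the two branches of $\tau^*(\chi)$: the generic exponential branch when $\kappa\ne 2\lambda_2/m_1$ and the degenerate rational branch in the resonant case. Evaluating at $\chi=1$ gives $\tau^*(1)$; the role of $\kappa=\max\{2\lambda_1/m_1,\,2\lambda_3\epsilon/m_2\}$ is to package both the $\lambda_1\bar\rho$ drift and the $\lambda_3\epsilon$ disturbance drift into one comparison constant so that a single $\tau^*$ dominates both the stability threshold and the $\bar\delta$ threshold.

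Combining the two phases, $t_{k+1}-t_k \ge \hat\tau + (t_{k+1}-\hat t_k) \ge \hat\tau$ always, and when the binding constraint after $\hat t_k$ is the $\varphi$-growth one gets the extra $\tau^*(1)$; taking the infimum over $k$ and over $\xi_0\in\Xi$, $\|d\|_\infty\le\epsilon$ gives $\tau_m=\min\{\tau^*(1),\hat\tau\}>0$, which is the robust semi-global event-separation property of Definition~\ref{def-separation}. I expect the main obstacle to be the bookkeeping in the error comparison: one must be careful that the threshold the error is racing toward is itself time-varying (it involves $\|\xi(t)\|^p$, $\phi_1(t)$, $\delta_k(t)$), so the cleanest route is to bound the threshold from below by a constant multiple of $\|\xi\|^p$ (via $B_1/c>0$, which is where Remark~\ref{lambda}'s constraint on $\lambda$ enters) and the error from above by the linear-ODE solution, and only then divide — doing it in the wrong order produces a non-autonomous inequality that does not integrate to the stated closed form. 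A secondary subtlety is ensuring that the $p$-th-power manipulations ($c=c_2+\lambda_2^p/p$ comes from a Young-type split) are applied consistently with the definitions of $B_1$, $m_1$, $m_2$; but these are routine once the phase decomposition and the comparison lemma are in place.
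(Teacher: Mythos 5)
Your plan hinges on a structural claim that is false for this scheme: you treat $\hat{\tau}$ as an enforced dwell time, asserting that no event can occur on $[t_k,\hat{t}_k)$ and that the ``real work'' is to lower-bound $t_{k+1}-\hat{t}_k$. But the TC (\ref{trig_cond}) is monitored continuously from $t_k$ (the design is explicitly ``pure'' event-based, in contrast to time-regularization), and $\hat{\tau}$, defined through $\tau_1,\tau_3$ and the margins $B_1,B_3$, only limits how much the error growth can erode the performance inequalities before $\hat{t}_k$ (it is what feeds Lemma \ref{lem_error} and Theorem \ref{thm_main}); nothing prevents $\Phi$ from reaching zero before $\hat{t}_k$. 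Consequently your decomposition inverts the actual case analysis, and your concluding bound $t_{k+1}-t_k\geq \hat{\tau}+(t_{k+1}-\hat{t}_k)$ is unjustified (it would also make the stated $\min\{\tau^*(1),\hat{\tau}\}$ pointless, since it would force $\tau_m\geq\hat{\tau}$). Worse, the phase on which you propose to do the estimate, $[\hat{t}_k,t_{k+1})$, is exactly where Proposition \ref{prop_phi_positive} gives only $\phi_2(t)=\delta_k(t)$, and $\delta_k$ is required to satisfy $\sum_k\int_{\hat{t}_k}^{t_{k+1}}\delta_k\leq\theta_1$, hence may be arbitrarily small and decaying: there is no uniform $\bar{\delta}$-floor there, so the $m_2$-threshold argument cannot deliver a uniform MIET on that phase.

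The paper argues the other way around: if $t_{k+1}>\hat{t}_k$ the inter-event time exceeds $\hat{\tau}$ and the claim is trivial; otherwise, on $[t_k,\hat{t}_k)$ Proposition \ref{prop_phi_positive} guarantees $\phi_1\geq 0$ and $\phi_2\geq\bar{\delta}$, so a more conservative trigger is $\varphi=k_2\bar{\delta}$, which via $\|g(\xi)(\gamma(\xi+\varepsilon)-\gamma(\xi))\|\leq\lambda_2\|\varepsilon\|$, Lemma \ref{lem_1}(ii) with Assumption \ref{assumption_ISS}(iii), and Lemma \ref{lem_1}(i) collapses to $2\|\varepsilon\|=m_1\|\xi\|+m_2$; one then tracks the ratio $\chi=2\|\varepsilon\|/(m_1\|\xi\|+m_2)$, and Lemma \ref{lem_lip} together with the bound $(2\lambda_1\|\xi\|+2\lambda_3\|d\|)/(m_1\|\xi\|+m_2)\leq\kappa$ gives $\dot{\chi}\leq(1+m_1\chi/2)(\kappa+\lambda_2\chi)$, whose separable integration from $\chi=0$ to $\chi=1$ produces exactly the two branches of (\ref{chi}). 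Your central estimate is in the right spirit (error comparison, ratio racing toward a threshold, boundedness of the $\lambda_i$ for $\xi_0\in\Xi$, $\|d\|_\infty\leq\epsilon$), but the linear comparison $\dot{w}\leq\lambda_2 w+\kappa m_1/2$ you propose does not yield the prefactor $1/(\lambda_2-m_1\kappa/2)$ nor the resonant branch at $\kappa=2\lambda_2/m_1$; the quadratic (Riccati-type) product form above is needed. Fixing the phase decomposition (do the work \emph{before} $\hat{t}_k$, where the $\bar{\delta}$-floor holds) and replacing the linear ODE by the $\chi$-dynamics is what closes the gap.
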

To prove Lemma \ref{semi_global} we report here two useful inequalities.
\begin{lem}\label{lem_1}
	For any $p,q\geq 1$ with $\frac{1}{p}+\frac{1}{q}=1$ and any $r>0$
	\begin{eqnarray}
	(i)~\|x+y\|^r\leq 2^r\|x\|^r+2^r\|y\|^r ~~~~~~~~~~~~~~~~~~~~~~~~~~~~~~~~~~~~~\nonumber\\
	(ii)\int_{\mathcal{T}}\|x(\tau)y(\tau)\| d\tau \leq \Big{(}\int_{\mathcal{T}}\|x(\tau)\|^p d\tau\Big{)}^{\frac{1}{p}} \Big{(}\int_{\mathcal{T}}\|y(\tau)\|^q d\tau\Big{)}^{\frac{1}{q}}\nonumber.~
	\end{eqnarray}
\end{lem}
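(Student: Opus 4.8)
The plan is to establish the two inequalities separately; both are classical, so the write-up is short and the only points requiring a little care are the range $r\in(0,1)$ in part~(i) and the degenerate cases in part~(ii).

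\emph{Part (i).} I would begin from the triangle inequality $\|x+y\|\le\|x\|+\|y\|$ together with the elementary bound $\|x\|+\|y\|\le 2\max\{\|x\|,\|y\|\}$. Since $t\mapsto t^{r}$ is nondecreasing on $[0,\infty)$ for every $r>0$, raising both sides to the power $r$ preserves the inequality, giving $\|x+y\|^{r}\le 2^{r}\max\{\|x\|^{r},\|y\|^{r}\}\le 2^{r}(\|x\|^{r}+\|y\|^{r})$. Bounding the maximum by the sum (rather than invoking convexity of $t\mapsto t^{r}$) is precisely what keeps the argument valid for all $r>0$, including $r<1$; if one only needed $r\ge 1$ the sharper constant $2^{r-1}$ would follow from convexity, but $2^{r}$ as stated suffices here.

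\emph{Part (ii).} This is H\"older's inequality. I would first dispose of the trivial cases: if $\int_{\mathcal T}\|x(\tau)\|^{p}\,d\tau$ or $\int_{\mathcal T}\|y(\tau)\|^{q}\,d\tau$ vanishes or is infinite, the bound is immediate. Otherwise, set $a(\tau)=\|x(\tau)\|\big/\big(\int_{\mathcal T}\|x\|^{p}\big)^{1/p}$ and $b(\tau)=\|y(\tau)\|\big/\big(\int_{\mathcal T}\|y\|^{q}\big)^{1/q}$, so that $\int_{\mathcal T}a^{p}\,d\tau=\int_{\mathcal T}b^{q}\,d\tau=1$. I then apply Young's inequality $ab\le a^{p}/p+b^{q}/q$ (valid for $a,b\ge0$ whenever $1/p+1/q=1$) pointwise in $\tau$ and integrate over $\mathcal T$; using $\|x(\tau)y(\tau)\|\le\|x(\tau)\|\,\|y(\tau)\|$ this gives $\int_{\mathcal T}\|xy\|\,d\tau\le\big(\int_{\mathcal T}\|x\|^{p}\big)^{1/p}\big(\int_{\mathcal T}\|y\|^{q}\big)^{1/q}\,(1/p+1/q)$, and $1/p+1/q=1$ yields the claim. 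The only supporting fact invoked is Young's inequality itself, which follows from concavity of the logarithm: $\log(a^{p}/p+b^{q}/q)\ge(1/p)\log a^{p}+(1/q)\log b^{q}=\log(ab)$ for $a,b>0$, the case $ab=0$ being obvious.

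There is no real obstacle here; both statements are textbook. The only things worth flagging are that part~(i) is asserted for every $r>0$, so the proof must avoid any step that tacitly assumes $r\ge1$, and that in part~(ii) the normalization step is legitimate only after the degenerate (zero or infinite norm) cases have been removed.
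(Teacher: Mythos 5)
Your proof is correct. The paper itself offers no proof of this lemma---it simply ``reports'' the two inequalities as standard facts (the first is the elementary bound $\|x+y\|^{r}\le 2^{r}(\|x\|^{r}+\|y\|^{r})$, the second is H\"older's inequality)---so there is no argument in the paper to compare against; your write-up supplies exactly the textbook justification that is tacitly being invoked. The two points you flag are handled properly: bounding $\|x\|+\|y\|$ by $2\max\{\|x\|,\|y\|\}$ before raising to the power $r$ keeps part (i) valid for all $r>0$, and in part (ii) disposing of the zero/infinite-norm cases before normalizing and applying Young's inequality pointwise is the right order of operations (note that the hypothesis $1/p+1/q=1$ with $p,q\ge 1$ in fact forces $p,q>1$, so the Young step with finite conjugate exponents is legitimate as you use it).
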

\begin{newproofof}
	\textit{Lemma \ref{semi_global}.}
	We aim to modify (\ref{trig_cond}) to obtain a more conservative TC (in the sense that triggering threshold would be reached sooner) since such a TC gives rise to a lower bound on MIET. To begin, we first make the use of Proposition \ref{prop_phi_positive} which implies $\phi_1\geq 0$, ${\phi_2}(t)\geq \bar{\delta}$ for $t\in[t_k,\hat{t}_k)$ and hence modify (\ref{trig_cond}) as $\varphi=k_2\bar{\delta}$ in this interval. Note that we will assume $t_{k+1}\leq \hat{t}_k$ since otherwise $\tau_m=\hat{\tau}$ and the event-separation property holds trivially.
	From the inequality given in the sketch of proof of Lemma \ref{lem_lip} with $\tilde{\xi}=\xi$, $\tilde{\varepsilon}=0$, $\tilde{d}=d$, one can conclude $\|g(\xi)(\gamma(\xi+\varepsilon)-\gamma(\xi))\|\leq \lambda_2 \|\varepsilon\|$ and hence modify condition $\varphi=k_2\bar{\delta}$ as 
	\begin{eqnarray}
	c_2\|\varepsilon\|^p+\lambda_2\|\nabla V_{c,\lambda}(\xi)\|\|\varepsilon\|=c_1\sigma\|\xi\|^p+k_2\bar{\delta}.\nonumber
	\end{eqnarray}
	Next, from Lemma \ref{lem_1}(ii) and Assumption \ref{assumption_ISS}(iii), we obtain $c \|\varepsilon\|^p =B_1\|\xi\|^p+k_2\bar{\delta}$. Finally, Lemma \ref{lem_1}(i) suggests 
	\begin{eqnarray}
	\Big{(}\frac{{B_1}^{\frac{1}{p}}}{2}\|\xi\|+\frac{({k_2\bar{\delta}})^{\frac{1}{p}}}{2}\Big{)}^p\leq B_1\|\xi\|^p+k_2\bar{\delta},\nonumber
	\end{eqnarray}
	and hence, the desired modification of (\ref{trig_cond}) is obtained as 
	\begin{eqnarray}\label{m1m2}
	2\|\varepsilon\|=m_1{\|\xi\|}+m_2.
	\end{eqnarray}
	Define $\chi:=2\|\varepsilon\|/(m_1{\|\xi\|}+m_2)$, it can be concluded that
	\begin{eqnarray}
	\dot{\chi}\leq \Big{(}1+m_1\frac{\chi}{2}\Big{)}\Big{(}\frac{2\|\dot{\xi}\|}{m_1{\|\xi\|}+m_2}\Big{)}\leq \Big{(}1+m_1\frac{\chi}{2}\Big{)}\Big{(}\kappa+\lambda_2\chi\Big{)}\nonumber
	\end{eqnarray}
	where Lemma \ref{lem_lip} is used to obtain the last inequality. Therefore, $\tau^*(\chi)=t-t_k$ can be obtained as in (\ref{chi}) by solving
	\begin{eqnarray}
	\displaystyle \int_{t_k}^{t}d\tau=\displaystyle \int_{\chi(t_k)=0}^{\chi(t)}{\big{(}1+m_1\frac{\ell}{2}\big{)}^{-1}\big{(}\kappa+\lambda_2\ell\big{)}^{-1}}{d\ell}.\nonumber
	\end{eqnarray}
	Event rule (\ref{m1m2}) suggests that triggering occurs when $\chi=1$, thus $t_{k+1}=t_k+\tau^*(1)$. In addition, (\ref{chi}) implies that $\tau^*(1)$ is strictly nonzero since for $\xi_0\in\Xi$ and ${\|d\|}_\infty\leq \epsilon$, Lemma \ref{lem_lip} suggests that $\lambda_1$, $\lambda_2$, $\lambda_3$, and hence $\kappa$ are bounded. 
	The result then follows from definition of $\tau_{m}$ and positiveness of $\hat{\tau}$. 
\end{newproofof}
\subsection{Comparison with the existing strategies}\label{compare}
In this subsection we study several popular existing ETMs that can be extracted as special cases of (\ref{trig_cond})-(\ref{Delta}). We emphasize that the design criteria in these references is not the same so our comparison is merely based on the structure of the triggering rule
with no reference to the relative merits or performance in each design, simply because there seems to be no fair way or value in such comparison.
Moreover, since some of these works focus on output feedback, in our comparisons we assume the measurable output to be the full state vector.

Our proposed ETM is \emph{dynamic} due to the existence of the dynamic variable $\phi_1$. See \cite{Girard,Postuyan_GFW} for discussions regarding the effect of this variable. To the best of our knowledge, the parameter $\phi_2$ has not been introduced before. Thus, we provide the following observations regarding $\phi_2$.

(i) The inter-event expansion that originates from $\phi_2$ can be quantified for a desired period of time, or a desired number of trigger instants (see \cite{Me_submitted}). 
(ii) As shown in \cite{Me_submitted} through several examples, ${\phi_2}$ serves to avoid redundant samplings when the norm of state is close to $0$. This is important since as a primary pitfall, triggering rules based on the norm of the state tend to increase triggering as the state approaches the origin. 
(iii) The primary functionality of ${\phi_2}$ is to exclude Zeno behaviour as suggested by the proof of Lemma \ref{semi_global}. 
(iv) While the approach in the present article is considered purely event-based, an appropriate choice of parameters in the dynamics of ${\phi_2}$ enables the TC (\ref{trig_cond}) to capture the time-regularization strategies. 
We conclude by extracting several triggering rules proposed in the literature from (\ref{trig_cond}). Note that $\bar{k}=1$ unless otherwise stated.

$\bullet$ \cite{Me_submitted}: For $k_1=0$ and $s_k=\bar{\delta}$, TC (\ref{trig_cond}) reduces to the one proposed in \cite{Me_submitted}. 

In the rest of our comparisons we assume $\varphi_3 = 0$ in (\ref{varphi}).

$\bullet$ \cite{event-separation}:
	For $k_1=0$ and $\delta_k(t)=s_k=\hat{s}_k=\bar{\delta}$ we obtain ${\phi_2}=\bar{\delta}$. Hence, the TC becomes $\varphi(\xi,\varepsilon)=k_2\bar{\delta}$. 
$\bullet$ \cite{Girard}:
	Take $k_2=0$, (\ref{trig_cond}), (\ref{phi}) reduce to $\varphi(\xi,\varepsilon)=k_1 {\phi_1}$, $\dot{\phi}_1+\alpha_1({\phi_1}) =-\varphi$.
$\bullet$ \cite{me_iet}: 	
	Taking $k_2=0$, $\bar{k}=0$ and $\alpha_1(r)=0$ for any $r$, (\ref{trig_cond}) reduces to ${\phi_1}=0$, where ${\phi_1}(t)=-\int_{t_k}^{t} \varphi(\xi(s),\varepsilon(s))ds$, {\it i.e.,} the integral-based TC. 
$\bullet$ \cite{tabuada}: 
	Substitute $k_1=k_2=0$ in (\ref{trig_cond}) one can extract the TC $\varphi=0$. 
$\bullet$ \cite{Dolk_LP_ieee,Dolk_Lp}:
Define $\hat{t}_k=t_k+\tau_{m}$ where $\tau_{m}= \displaystyle\min\{\tau^*(1),\hat{\tau}\}$. This guarantees no triggering of the control task occurs over $[t_k,\hat{t}_k)$. Then, if we set $k_2=0$ and $\bar{k}=0$, $\Phi(t)=0$ reduces to $\phi_1(t)=0$, thereby $t_{k+1}$ in (\ref{eq sys}) can be written in a time-regularization fashion as $t_{k+1}=\inf \{t\in\mathbb{R}:t>t_k+\tau_{m} \bigwedge  {\phi_1}(t^-)=0\}$ where $\dot{\phi}_1=-\varphi$ by setting $k_2=0$ and $\alpha_1(r)=0$ for any $r$ in (\ref{phi}). 
 $\bullet$ \cite{Mahmoud_Abdolrahim}:
Set $k_1=k_2=0$ and follow similar lines as in comparison with \cite{Dolk_LP_ieee,Dolk_Lp}, we get $t_{k+1}=\inf \{t\in\mathbb{R}:t>t_k+\tau_{m} \bigwedge  \varphi(\xi(t^-),\varepsilon(t^-))=0\}$. 
$\bullet$  \cite{Postuyan_GFW}: 
Let $k_1=0$, $\bar{\varphi}(t)=0$ for all $t\in\mathbb{R}$ and $\hat{s}_k={\phi}_2(\hat{t}_k^-)$. Choose $s_k\geq 0$ we have ${\phi}_2(t)\geq 0$ for all $t\geq t_0$. Then (\ref{trig_cond}), (\ref{phi}) reduce to $\varphi_2(\varepsilon)=-\varphi_1(\xi)+k_2{\phi_2}$, where $\dot{\phi}_2+\alpha_2({\phi_2})=0$. In this case, ${\phi_2}$ plays the role of threshold variable defined in \cite{Postuyan_GFW}. However, unlike the present work where ${\phi_2}$ appears in the TC as a positive term that is added to some functions of state's norm, in \cite{Postuyan_GFW}, the admissible measurement error is bounded by the maximum of these two. 
\subsection{$\mathcal{L}_p$-gain performance}\label{main section}
We start with a useful lemma which is an application of Lemma \ref{lem_lip} and its proof is given in the Appendix.
\begin{lem}\label{lem_error}
	Let $a=\lambda_1^p\psi(\hat{\tau},\lambda_2)$, $b=\lambda_3^p\psi(\hat{\tau},\lambda_2)$. Then
	\begin{eqnarray}
	\int_{t_k}^{\hat{t}_k}\|\varepsilon(\tau)\|^pd\tau
	\leq a\int_{t_k}^{\hat{t}_k}\|\xi(\tau)\|^pd\tau
	+ b\int_{t_k}^{\hat{t}_k}\|d(\tau)\|^pd\tau.\label{lem_error_eq}
	\end{eqnarray}
\end{lem}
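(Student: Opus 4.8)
The plan is to turn the differential bound of Lemma~\ref{lem_lip} into a pointwise Grönwall estimate for $\|\varepsilon\|$, and then integrate that estimate with a carefully chosen Hölder split so that the accumulated constant reassembles exactly into $\psi(\hat\tau,\lambda_2)$. Since $\varepsilon(t_k)=0$ and $\dot\varepsilon=-\dot\xi$ on $(t_k,t_{k+1})$, we have $\varepsilon(t)=-\int_{t_k}^{t}\dot\xi(s)\,ds$, hence $\|\varepsilon(t)\|\le\int_{t_k}^{t}\|\dot\xi(s)\|\,ds$. Applying Lemma~\ref{lem_lip} gives $\|\varepsilon(t)\|\le\int_{t_k}^{t}\big(w(s)+\lambda_2\|\varepsilon(s)\|\big)\,ds$ with $w(s):=\lambda_1\|\xi(s)\|+\lambda_3\|d(s)\|$. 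Since the forcing term $\int_{t_k}^{\cdot}w$ is nondecreasing, Grönwall's inequality (equivalently, integrating the scalar linear inequality for $R(t):=\int_{t_k}^{t}\|\varepsilon(s)\|\,ds$ and an integration by parts) yields the convolution bound
\[
\|\varepsilon(t)\|\le\int_{t_k}^{t} w(s)\,e^{\lambda_2(t-s)}\,ds .
\]

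Next I would split $e^{\lambda_2(t-s)}=e^{\frac{\lambda_2}{2}(t-s)}\,e^{\frac{\lambda_2}{2}(t-s)}$ and apply Hölder's inequality (Lemma~\ref{lem_1}(ii)) with conjugate exponents $p$ and $q=p/(p-1)$, keeping the first half-exponential attached to $w(s)$. The second factor is $\big(\tfrac{2(p-1)}{\lambda_2 p}\big)^{1/q}\big(e^{\frac{\lambda_2 p}{2(p-1)}(t-t_k)}-1\big)^{1/q}$ because $\lambda_2 q/2=\frac{\lambda_2 p}{2(p-1)}$. Raising to the power $p$ (so that $p/q=p-1$) and bounding the $(e^{\cdot}-1)^{p-1}$ factor by its value at $t-t_k=\hat\tau$ gives $\|\varepsilon(t)\|^p\le\big(\tfrac{2(p-1)}{\lambda_2 p}\big)^{p-1}\big(e^{\frac{\lambda_2 p}{2(p-1)}\hat\tau}-1\big)^{p-1}\int_{t_k}^{t} w(s)^p e^{\frac{\lambda_2 p}{2}(t-s)}\,ds$. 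Integrating over $t\in[t_k,\hat t_k]$, interchanging the order of the $s$- and $t$-integrations, and using $\int_{s}^{\hat t_k}e^{\frac{\lambda_2 p}{2}(t-s)}\,dt\le\tfrac{2}{\lambda_2 p}\big(e^{\frac{\lambda_2 p}{2}\hat\tau}-1\big)$, I obtain
\[
\int_{t_k}^{\hat t_k}\|\varepsilon(\tau)\|^p\,d\tau\le\frac{2^{p}(p-1)^{p-1}}{\lambda_2^{p}p^{p}}\big(e^{\frac{\lambda_2 p}{2(p-1)}\hat\tau}-1\big)^{p-1}\big(e^{\frac{\lambda_2 p}{2}\hat\tau}-1\big)\int_{t_k}^{\hat t_k} w(\tau)^p\,d\tau .
\]
Finally, Lemma~\ref{lem_1}(i) with $r=p$ gives $w(\tau)^p\le 2^{p}\lambda_1^{p}\|\xi(\tau)\|^p+2^{p}\lambda_3^{p}\|d(\tau)\|^p$; substituting this turns the leading constant into $\frac{2^{2p}(p-1)^{p-1}}{\lambda_2^{p}p^{p}}\big(e^{\frac{\lambda_2 p}{2(p-1)}\hat\tau}-1\big)^{p-1}\big(e^{\frac{\lambda_2 p}{2}\hat\tau}-1\big)=\psi(\hat\tau,\lambda_2)$, which is exactly \eqref{lem_error_eq} with $a=\lambda_1^{p}\psi(\hat\tau,\lambda_2)$ and $b=\lambda_3^{p}\psi(\hat\tau,\lambda_2)$.

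I expect the only real difficulty to be bookkeeping: picking the symmetric $\tfrac12$--$\tfrac12$ exponential split and the Hölder pairing so that the exponents and constants reassemble into the precise definition of $\psi$, rather than into some merely equivalent upper bound. Two minor technical points: one should note that on $[t_k,\hat t_k]$ (restricted, if necessary, to its intersection with $[t_k,t_{k+1})$) the ETS dynamics are in force, so Lemma~\ref{lem_lip} applies and $\lambda_1,\lambda_2,\lambda_3$ are constants there (finite by Lemma~\ref{lem_state_bound}); and the case $p=1$ is handled separately, with $q=\infty$ so that the Hölder step degenerates to $\sup_{s\le t}e^{\lambda_2(t-s)}\le e^{\lambda_2\hat\tau}$ and $\psi$ interpreted with the convention $0^0=1$, the estimate then going through verbatim.
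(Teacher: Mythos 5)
Your proof is correct and follows essentially the same route as the paper's: a Gr\"onwall-type convolution bound from Lemma \ref{lem_lip} with $\varepsilon(t_k)=0$, the symmetric $e^{\frac{\lambda_2}{2}(t-s)}\cdot e^{\frac{\lambda_2}{2}(t-s)}$ split with H\"older (Lemma \ref{lem_1}(ii)), an interchange of the order of integration, and Lemma \ref{lem_1}(i) to separate the $\|\xi\|^p$ and $\|d\|^p$ contributions, reassembling exactly into $\psi(\hat{\tau},\lambda_2)$. The only difference is cosmetic — you apply the $2^p$ power-of-sum inequality to $w^p$ at the end, whereas the paper applies it to $\lambda_1\mathcal{I}(\xi)+\lambda_3\mathcal{I}(d)$ before H\"older — and your explicit remarks on the validity interval and the $p=1$ case are harmless refinements.
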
	
\begin{rmk}\label{rmk_a,b}
	In view of the definition of $a$, $b$ and $\hat{\tau}$, one can verify that $a\leq \lambda_1^p\psi(\tau_1,\lambda_2)$, $b\leq \lambda_3^p\psi(\tau_3,\lambda_2)$. Also from definition of $\tau_1$, $\tau_3$ we conclude $ac<B_1$, $bc\leq B_3$; inequalities that will be used later in the proof of main results.
\end{rmk}	
Next theorem states our primary result where the finite gain $\mathcal{L}_p$-stability of continuous-time system (\ref{eq sys_1}) is shown to be preserved under the event-based execution of control task. Compared to \cite{Dolk_LP_ieee,Mahmoud_Abdolrahim}, our result relies on a less conservative set of assumptions.
\begin{thm}\label{thm_main}
Under Assumptions \ref{assumption_ISS}, \ref{ass_alpha}, \ref{ass_impose} and ETM (\ref{trig_cond})-(\ref{tau_hat}) the ETS (\ref{eq sys}) is finite gain $\mathcal{L}_p$-stable with $\mathcal{L}_p$-gain $\leq \mu_d$. In addition, the origin $\xi=0$ is globally asymptotically stable.
\end{thm}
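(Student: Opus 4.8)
The plan is to build a single nonnegative storage function from the two certificates of Assumption \ref{assumption_ISS} together with the dynamic variable $\phi_1$, and to estimate its change over each inter-event interval $[t_k,t_{k+1})$ after splitting that interval at $\hat t_k$ into a ``$\bar\delta$-part'' $[t_k,t_{k+1}\wedge\hat t_k)$ and, when nonempty, a ``$\delta_k$-part'' $[\hat t_k,t_{k+1})$. Concretely I would track $W_0:=V_s(\xi)+V_{c,\lambda}(\xi)$ (continuous, since $\xi$ is) and let $\phi_1$ enter the estimate only on the $\delta_k$-parts, so that the only way the resets \eqref{IC} of $\phi_1$ can push $W_0$ up is through the summable quantities $\hat r_k$ (with $\sum_k\hat r_k\le\theta_3$, Assumption \ref{ass_impose}); by Proposition \ref{prop_phi_positive} everything in sight is nonnegative. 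Differentiating $W_0$ along \eqref{eq sys}, Assumption \ref{assumption_ISS}(i)--(ii) and the definition \eqref{phi_3} of $\varphi_3$ give, pointwise,
\[
\tfrac{d}{dt}\!\left(V_s+V_{c,\lambda}\right)\le -c_1\|\xi\|^p+c_2\|\varepsilon\|^p+\varphi_3(\xi,\varepsilon)+(c_3+\lambda\mu^p)\|d\|^p-\lambda\|z\|^p .
\]

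On the $\delta_k$-part I would add $\dot\phi_1=-\alpha_1(\phi_1)+k_2\phi_2-\varphi$ and invoke the elementary identity $c_2\|\varepsilon\|^p+\varphi_3(\xi,\varepsilon)=\varphi(\xi,\varepsilon)+c_1\sigma\|\xi\|^p$ read off from \eqref{varphi}; the $\varphi$ terms cancel, and with $\alpha_1(\phi_1)\ge0$, $\phi_2=\delta_k$ there (Proposition \ref{prop_phi_positive}), and $c_3+\lambda\mu^p\le\lambda\mu_d^p$ (equivalently $B_3\ge0$, see Remark \ref{lambda}), one gets $\tfrac{d}{dt}(W_0+\phi_1)\le -c_1(1-\sigma)\|\xi\|^p-\alpha_1(\phi_1)+k_2\delta_k(t)+\lambda\mu_d^p\|d\|^p-\lambda\|z\|^p$. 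On the $\bar\delta$-part I would deliberately \emph{not} use the triggering condition (otherwise a non-summable term $k_2\bar\delta\hat\tau$ per event would appear): instead bound $\varphi_3\le\frac{(\bar c_3\lambda)^q}{q}\|\xi\|^p+\frac{\lambda_2^p}{p}\|\varepsilon\|^p$ by Young's inequality with Assumption \ref{assumption_ISS}(iii) and Lemma \ref{lem_lip}, integrate over $[t_k,t_{k+1}\wedge\hat t_k)$, and substitute Lemma \ref{lem_error}; the inequalities $ac<B_1$ and $bc\le B_3$ of Remark \ref{rmk_a,b} then collapse the $\|\xi\|^p$- and $\|d\|^p$-coefficients exactly to $-c_1(1-\sigma)$ and $\lambda\mu_d^p$, giving $\left(V_s+V_{c,\lambda}\right)(t_{k+1}\wedge\hat t_k)-\left(V_s+V_{c,\lambda}\right)(t_k)\le -c_1(1-\sigma)\!\int\|\xi\|^p+\lambda\mu_d^p\!\int\|d\|^p-\lambda\!\int\|z\|^p$.

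Summing these two estimates over all events in $[t_0,T]$ — finitely many by Lemma \ref{semi_global} — using $\phi_1\ge0$ (so the $\phi_1$ added at $\hat t_k$ is removed at $t_{k+1}^-$ with no extra cost), $k_2\sum_k\int_{\hat t_k}^{t_{k+1}}\delta_k\le k_2\theta_1$, $\sum_k\hat r_k\le\theta_3$ (Assumption \ref{ass_impose} and \eqref{phi_eta_Ti}), dropping the nonpositive terms $-c_1(1-\sigma)\int\|\xi\|^p$ and $-\int\alpha_1(\phi_1)$, and using $W_0(T)\ge0$, yields
\[
\int_{t_0}^{T}\|z(s)\|^p ds\le \mu_d^p\int_{t_0}^{T}\|d(s)\|^p ds+\tfrac{1}{\lambda}\!\left(V_s(\xi_0)+V_{c,\lambda}(\xi_0)\right)+\tfrac{1}{\lambda}\!\left(k_2\theta_1+\theta_2+\theta_3\right),
\]
i.e.\ \eqref{practical eq} with gain $\mu_d$, $\beta(\xi_0)=\tfrac{1}{\lambda}(V_s(\xi_0)+V_{c,\lambda}(\xi_0))$ (positive semidefinite since $V_s(0)=V_c(0)=0$) and a finite $\eta$. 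For global asymptotic stability I would set $d\equiv0$ and retain $-c_1(1-\sigma)\int\|\xi\|^p$ in the summation, obtaining $\int_{t_0}^{\infty}\|\xi(s)\|^p ds<\infty$; since $\xi$ is bounded (Lemma \ref{lem_state_bound}) and $\varepsilon$ is bounded (Lemma \ref{lem_lip} together with the fact that $\varphi$ grows in $\|\varepsilon\|$, so a new event is forced), $\dot\xi$ is bounded, so $\|\xi\|^p$ is uniformly continuous and Barbalat's lemma gives $\xi(t)\to0$; Lyapunov stability of $\xi=0$ follows from the bound on $W_0$ together with the positive definiteness and radial unboundedness of $V_s,V_{c,\lambda}$, with the design parameters $r_k,\hat r_k,\delta_k$ (hence $\theta_1,\theta_2,\theta_3$) taken commensurately with the initial state.

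The step I expect to be the main obstacle is exactly this ``quarantine'' of the constant threshold $\bar\delta$: on the portion where $\phi_2=\bar\delta$ the dissipation estimate must be closed \emph{without} invoking the triggering condition — purely through Lemma \ref{lem_error} and the dwell-parameter inequalities $ac<B_1$, $bc\le B_3$ — because any surviving $k_2\bar\delta$ term, incurred once per event, would wreck both the finite-gain inequality and asymptotic stability; verifying that the scalar conditions on $\lambda$ (Remark \ref{lambda}) and on $a,b$ (Remark \ref{rmk_a,b}) fit together so that every sign-indefinite contribution ($\varphi_3$, the $\|\varepsilon\|^p$ term, the extra $c_3\|d\|^p$) is absorbed into $-c_1(1-\sigma)\|\xi\|^p$, $\lambda\mu_d^p\|d\|^p$ and $-\lambda\|z\|^p$ is the computational heart of the argument.
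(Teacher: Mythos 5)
Your proposal is correct and follows essentially the same route as the paper's proof: the same storage function $V_s+V_{c,\lambda}$, the same split of each inter-event interval at $\hat t_k$, with the $[t_k,\hat t_k)$ portion closed via Lemma \ref{lem_error} and the inequalities $ac<B_1$, $bc\le B_3$ of Remark \ref{rmk_a,b} (no triggering condition used there), the $[\hat t_k,t_{k+1})$ portion closed via the $\phi_1$-dynamics so that only the summable $\hat r_k$ and $k_2\int\delta_k$ terms survive, and then summation over events. Your treatment of global asymptotic stability—scaling $\delta_k,\hat r_k$ (hence $\theta_1,\theta_3$) with the initial state and invoking Barbalat for convergence—matches the paper's $\lambda_0 V_s(\xi_0)$ rescaling and its omitted convergence step.
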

\begin{proof}
	For $t\in[t_k,\hat{t}_{k})$, Assumption \ref{assumption_ISS}(ii) suggests
	\begin{eqnarray}\label{W_dot}
	\dot{V}_c(\xi)\leq
	\mu^p\|d\|^p-\|z\|^p+\nabla V_c(\xi)\cdot g(\xi)(\gamma(\xi+\varepsilon)-\gamma(\xi))
	\end{eqnarray}
	which further reduces to
	\begin{eqnarray}
	\dot{V}_c(\xi) \leq \mu^p\|d\|^p-\|z\|^p+\lambda_2\| \nabla V_c(\xi)\|~\|\varepsilon\|\nonumber
	\end{eqnarray}
	by applying  $\|g(\xi)(\gamma(\xi+\varepsilon)-\gamma(\xi))\|\leq \lambda_2 \|\varepsilon\|$ (that is already proven in the proof of Lemma \ref{semi_global}). Thus, from Lemma \ref{lem_1}(ii) and Assumption \ref{assumption_ISS}(iii), we get
	\begin{eqnarray}
	\dot{V}_{c,\lambda(\xi)} \leq\frac{\lambda^q\bar{c}_3^q}{q}\|\xi\|^p
	+\frac{\lambda_2^p}{p}\|\varepsilon\|^p +\lambda \mu^p \|d\|^p -\lambda\|z\|^p.\nonumber
	\end{eqnarray}
	As a consequence, for  ${V}(\xi)=V_s(\xi)+V_{c,\lambda}(\xi)$ it follows from Assumption \ref{assumption_ISS}, (\ref{lem_error_eq}) and (\ref{W_dot}) that 
	\begin{eqnarray}
	{{V}}(\xi({\hat{t}_k})-{{V}}(\xi({t_k}))\leq -(c_1-\frac{\lambda^q\bar{c}_3^q}{q}-ac)\int_{{t_k}}^{{\hat{t}_k}}{\|\xi(\tau)\|^p}d\tau \nonumber \\ +(\lambda\mu^p+c_3+bc) \int_{{t_k}}^{{\hat{t}_k}}{\|d(\tau)\|^p}d\tau-\lambda\int_{{t_k}}^{{\hat{t}_k}}{\|z(\tau)\|^p}d\tau ~~\nonumber \\ 	\leq\lambda\mu_d^p \int_{{t_k}}^{{\hat{t}_k}}{\|d(\tau)\|^p}d\tau -\lambda\int_{{t_k}}^{{\hat{t}_k}}{\|z(\tau)\|^p}d\tau,~~~~~~~~~\nonumber
	\end{eqnarray} 
	where the last inequality follows from Remark \ref{rmk_a,b}.
	For $t\in[\hat{t}_k,t_{k+1})$ one can apply TC (\ref{trig_cond}) to obtain an upper bound on $\dot{{V}}$ as $\dot{{V}}(\xi) {\leq} -c_1(1-\sigma)\|\xi\|^p+(\lambda\mu^p{+}c_3)\|d\|^p-\lambda\|z\|^p+k_2{\phi_2}-\dot{\phi}_1$
	where $-\alpha_1({\phi_1})$ term is eliminated from the right hand side since $\phi_1$ is non-negative. It then follows that $\dot{{V}}(\xi) \leq \lambda\mu_d^p\|d\|^p-\lambda\|z\|^p+k_2\delta_k-\dot{\phi}_1$
	and hence
	\begin{eqnarray}
	{V}(\xi({t_{k+1}}))-{V}(\xi({\hat{t}_k}))\leq \lambda \mu_d^p \int_{{\hat{t}_k}}^{{t_{k+1}}}{\|d(\tau)\|^p}d\tau\nonumber \\   -\lambda\int_{{\hat{t}_k}}^{{t_{k+1}}}{\|z(\tau)\|^p}d\tau +k_2\int_{{\hat{t}_k}}^{{t_{k+1}}}{\delta_k(\tau)}d\tau+\hat{r}_k.\nonumber
	\end{eqnarray} 
	Therefore, we may conclude
	\begin{eqnarray}\label{recursive2}
	{V}(\xi({t_{k+1}}))-{V}(\xi({t_k}))\leq
	\lambda\mu_d^p\int_{{t_k}}^{{t_{k+1}}}{\|d(\tau)\|^p}d\tau \nonumber \\ -\lambda\int_{{t_k}}^{{t_{k+1}}}{\|z(\tau)\|^p}d\tau+k_2\int_{{\hat{t}_k}}^{{t_{k+1}}}{\delta_k(\tau)}d\tau+\hat{r}_k.\nonumber
	\end{eqnarray}
	Apply this inequality to the sampling intervals until $t\geq t_0$, the positive definiteness of ${V}$ can be employed to write
	\begin{eqnarray}
	\int_{t_0}^{t}{\|z(\tau)\|^p}d\tau \leq \mu_d^p \int_{t_0}^{t}{\|d(\tau)\|^p}d\tau+\frac{1}{\lambda}(k_2\theta_1+\theta_3+{V}(\xi_0)).\nonumber
	\end{eqnarray}
	This proves $\mathcal{L}_p$-stability of ETS (\ref{eq sys}) with $\mathcal{L}_p$-gain $\leq \mu_d$. To show asymptotic stability, let $d=0$. Using a similar process as we prove of $\mathcal{L}_p$-stability, it can be shown then suggests that for any for $\lambda=0$, any $t\geq t_0$, 
	\begin{eqnarray}
	{V_s} (\xi(t)) \leq-c_1(1-\sigma)\int_{t_0}^{t}{\|\xi(\tau)\|^p}d\tau+k_2 \theta_1+\theta_3+{V_s}(\xi_0).\nonumber
	\end{eqnarray}
	This proves the ultimate boundedness of trajectories of system (\ref{eq sys}). However, global asymptotic stability is postponed to show that for any $\epsilon\in\mathbb{R}^+$ there exists some $\delta\in\mathbb{R}^+$ such that if $\|\xi_0\|\leq\delta$, $\|\xi(t)\|\leq\epsilon$ for all $t\geq t_0$ and $\lim_{t\rightarrow\infty} \xi(t)=0$. This is achieved by redefining $\delta_k(t)$ (resp., $\hat{r}_k$) as $\lambda_0 {V_s}(\xi_0)\delta_k(t)$ (resp., $\lambda_0 {V_s}(\xi_0) \hat{r}_k$) for some $\lambda_0\in\mathbb{R}^+$. Thus by choosing 
	\begin{eqnarray}
	\delta={V}_s^{-1}\Big{(}\frac{{V_s}(\epsilon)}{1+\lambda_0(k_2 \theta_1+\theta_3)}\Big{)}\nonumber
	\end{eqnarray}
	for a given $\epsilon$, we have 
	\begin{eqnarray}
	{V_s}(\xi(t))\leq-c_1(1-\sigma)\int_{t_0}^{t}{\|\xi(\tau)\|^p}d\tau+{V_s}(\epsilon),\nonumber
	\end{eqnarray}
	{\it i.e.,} $\|\xi(t)\|\leq\epsilon$ for all $t\geq t_0$. Convergence of $\xi$ to zero is easy to show and omitted due to space limitations.
\end{proof}
\begin{rmk}\label{affine}
	If instead of affine structure (\ref{eq sys_0}), the system model is assumed to be $\dot{\xi}=f(\xi,u,d)$, our main findings which consist the results of Lemma \ref{semi_global} and Theorem \ref{thm_main} are still valid provided that $\varphi_3$ in (\ref{phi_3}) is replaced with $\varphi_3(r,s)  =\lambda_2 \nabla V_{c,\lambda}(r)\cdot \|s\|$. The details can be found in \cite{Me_submitted}.
\end{rmk}
\subsection{Inter-event time enlargement}
In the sequel, we present an important feature of TC (\ref{trig_cond}) on extending inter-event times. For this purpose, we define
\begin{eqnarray}
\tau^*_{max}\doteq\max \{\tau^*:\bar{\rho},\chi\in\mathbb{R}^+_0\},\nonumber
\end{eqnarray}
which in view of the following theorem, upper bounds the new extended inter-event times. In this definition $\tau^*$ is assumed to be a function of $\bar{\rho}$ and $\chi$, as suggested by (\ref{chi}) and dependence of $\lambda_i$, $i\in\{1,2,3\}$ on $\bar{\rho}$ (that is defined in Lemma \ref{lem_state_bound}).
\begin{thm}\label{thm_enlarge}
	For any $T^\circ\in\mathbb{R}^+$ and $\tau^\circ\in [0,\tau^*_{max}]$, $\bar{\varphi}$ in (\ref{Delta}) can be designed in a way that $t_{k+1}-t_k\geq \tau^\circ$ at least for $t_{k+1}\leq T^\circ$.
\end{thm}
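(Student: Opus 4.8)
The guiding idea is that each event is postponed by inflating the triggering threshold $k_1\phi_1+k_2\phi_2$ in (\ref{trig_cond}), and that $\bar\varphi$ is exactly the knob for this: by Proposition \ref{prop_phi_positive}, $\phi_2\ge\bar\delta$ on $[t_k,\hat t_k)$ and $\phi_2=\delta_k$ on $[\hat t_k,t_{k+1})$, so enlarging $\bar\delta$ and $\delta_k$ enlarges the threshold. I would split the argument into a ``tunable lower bound'' part and a ``finite-horizon realization'' part.

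\emph{Part 1 (a design-dependent lower bound on the inter-event time).} I reuse the reduction in the proof of Lemma \ref{semi_global}. On $[t_k,\hat t_k)$ the condition $\Phi=0$ is implied by the more conservative condition $\varphi(\xi,\varepsilon)=k_2\bar\delta$, which via Lemma \ref{lem_lip}, Lemma \ref{lem_1} and Assumption \ref{assumption_ISS}(iii) reduces to $2\|\varepsilon\|=m_1\|\xi\|+m_2$ with $m_2=(k_2\bar\delta/c)^{1/p}$, and the ratio $\chi=2\|\varepsilon\|/(m_1\|\xi\|+m_2)$ obeys $\dot\chi\le(1+m_1\chi/2)(\kappa+\lambda_2\chi)$, $\chi(t_k)=0$; integrating as in (\ref{chi}) gives that no event occurs before $t_k+\tau^*(1)$, hence $t_{k+1}-t_k\ge\min\{\tau^*(1),\hat\tau\}$. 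Now $\tau^*(1)$ is, through $m_1,m_2,\kappa$ and $\lambda_i$, a function of $\bar\delta$ once the induced $\bar\rho$ of Lemma \ref{lem_state_bound} (which depends on $\|\phi_2\|_\infty$, hence on $\bar\delta$) is substituted, and by the very definition $\tau^*_{max}=\max\{\tau^*:\bar\rho,\chi\in\mathbb{R}^+_0\}$ it never exceeds $\tau^*_{max}$. I would verify that this map is continuous and sweeps an interval $[0,\tau^\dagger]$ with $\tau^\dagger$ arbitrarily close to $\tau^*_{max}$; then for any $\tau^\circ\le\tau^\dagger$ one picks $\bar\delta$ with $\tau^*(1)\ge\tau^\circ$. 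If moreover $\tau^\circ\le\hat\tau$, this already gives $t_{k+1}-t_k\ge\tau^\circ$ for \emph{every} $k$.

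\emph{Part 2 (reaching any $\tau^\circ\le\tau^*_{max}$ on $[t_0,T^\circ]$ via $\delta_k$).} For the remaining range, in particular $\tau^\circ>\hat\tau$, I first fix $\bar\delta$ large enough (Part 1) that no event occurs on $[t_k,\hat t_k)$, so that $\phi_2=\delta_k$ becomes the operative part of the threshold on $[\hat t_k,t_{k+1})$. By the event-separation property of Lemma \ref{semi_global} the events are $\tau_m$-separated with $\tau_m>0$, so only finitely many of them, say $N$, satisfy $t_{k+1}\le T^\circ$; on the corresponding intervals I let $\delta_k$ be the constant $\bar\delta_\circ$, for which the computation of Part 1 (now with $\phi_2=\bar\delta_\circ$ on $[\hat t_k,t_{k+1})$) shows the event is pushed past $t_k+\tau^\circ$ as soon as $\bar\delta_\circ$ is large enough — feasible precisely while the associated conservative trigger time stays $\le\tau^*_{max}$ — and I return $\delta_k$ to its nominal small profile for all later $k$. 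This yields $t_{k+1}-t_k\ge\tau^\circ$ for all $t_{k+1}\le T^\circ$, while the accumulated-threshold constraint $\sum_k\int_{\hat t_k}^{t_{k+1}}\delta_k(\tau)\,d\tau\le\theta_1$ is respected by enlarging $\theta_1$ by the finite amount $N\bar\delta_\circ(\tau^\circ-\hat\tau)$, so Assumptions \ref{assumption_ISS}--\ref{ass_impose} and Theorem \ref{thm_main} remain in force.

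\emph{Expected main obstacle.} The crux is the feedback between the design parameters and the reachable set: inflating $\bar\delta$ or $\delta_k$ inflates $\|\phi_2\|_\infty$, hence $\bar\rho$ in Lemma \ref{lem_state_bound}, hence the local Lipschitz constants $\lambda_i$ and $\kappa$, which accelerates the growth of $\chi$ and thus \emph{shortens} inter-event times — counteracting the enlargement. Making Part 1 quantitative, i.e. checking that $\tau^*(1)$ (and the analogous conservative trigger time on $[\hat t_k,t_{k+1})$) is continuous in and attains every value of $[0,\tau^*_{max}]$ as $\bar\delta,\delta_k$ vary, despite the growth of $c$ and $\lambda_i$ with $\bar\rho$, is the delicate step; the finite horizon $T^\circ$ enters only to keep the budget $\theta_1$ on $\delta_k$ finite, which is why the statement restricts to $t_{k+1}\le T^\circ$ rather than claiming the bound for all $k$.
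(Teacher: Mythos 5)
Your overall strategy---inflate $\phi_2$ through $\bar{\varphi}$ over the finite horizon and convert the inflated threshold into an inter-event lower bound via the machinery of Lemma \ref{semi_global}---is the same as the paper's, but at the decisive step there is a genuine gap. In Part 1 you try to realize $\tau^\circ$ by enlarging $\bar{\delta}$ (equivalently $m_2$) while keeping the trigger at $\chi=1$ in the rescaled variable, and you need the map $\bar{\delta}\mapsto\tau^*(1)$ to sweep an interval reaching (nearly) $\tau^*_{max}$. This is not merely unverified; it fails with the estimates at hand: in the bound (\ref{chi}) the parameter $m_2$ enters only through $\kappa=\max\{2\lambda_1/m_1,\,2\lambda_3\epsilon/m_2\}$, so once $m_2$ is large enough that $\kappa$ saturates at $2\lambda_1/m_1$, further inflation of $\bar{\delta}$ (or of $\delta_k=\bar{\delta}_\circ$ in your Part 2) does not increase $\tau^*(1)$ at all, while it does increase $\|\phi_2\|_\infty$, hence $\bar{\rho}$ in Lemma \ref{lem_state_bound}, hence the $\lambda_i$, which shrinks the bound. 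Since $\tau^*(\cdot)$ is increasing in $\chi$, the value $\tau^*_{max}$ is in general approached only for large $\chi$, so the range of your map is capped strictly below $\tau^*_{max}$ and arbitrary $\tau^\circ\in[0,\tau^*_{max}]$ cannot be reached on your route; the ``expected main obstacle'' you flag is exactly this point and is left unresolved, so the argument remains an outline precisely where the theorem needs a proof.

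The paper's proof avoids the difficulty by never re-normalizing: it keeps the original variable $\chi=2\|\varepsilon\|/(m_1\|\xi\|+m_2)$ and the original $\kappa$, uses continuity and monotonicity of $\tau^*(\cdot)$ in (\ref{chi}) to pick $\chi^\circ\geq1$ with $\tau^\circ=\tau^*(\chi^\circ)$ (this is what makes the whole range $[0,\tau^*_{max}]$ accessible), and then designs $\bar{\varphi}$ so that $\phi_2\equiv\delta^*$ on $[0,T^\circ)$, with $\delta^*$ fixed through $\chi^*=\chi^\circ+\frac{m_1}{m_2}\bar{\rho}(\chi^\circ-1)$. The conservative trigger then reads $2\|\varepsilon\|=m_1\|\xi\|+\chi^*m_2$ (the analogue of (\ref{m1m2})), and since $\|\xi\|\leq\bar{\rho}$ by Lemma \ref{lem_state_bound}, at any event $\chi=(m_1\|\xi\|+\chi^*m_2)/(m_1\|\xi\|+m_2)\geq\chi^\circ$, whence $t_{k+1}-t_k\geq\tau^*(\chi^\circ)=\tau^\circ$ for $t_{k+1}\leq T^\circ$. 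The explicit compensation for the reachable-set bound $\bar{\rho}$ through $\chi^*$, together with measuring the postponement at fixed $m_2$ rather than by changing $m_2$, is the key idea missing from your proposal. (Your bookkeeping on the budget $\theta_1$ and on the finiteness of the number of events before $T^\circ$ is fine, and the finite horizon indeed serves exactly that purpose.)
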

\begin{proof}
	To find a lower bound on inter-event times, let us restrict the TC (\ref{trig_cond}) to $\varphi(\xi,\varepsilon)=k_2{\phi_2}$ by taking $k_1=0$. 
	Recalling the proof of Lemma \ref{semi_global} where the  triggering happens when $\chi=1$, our goal here is to design ${\phi_2}$ so that the triggering occurs for some $\chi>1$. 
	Note that $\tau^*_{max}\geq\tau^*(1)$ by definition.
	Due to continuity of $\tau^*$ in (\ref{chi}), for any $\tau^\circ\in [0,\tau^*_{max}]$ one can find $\chi^\circ$ (obviously $\geq1$) so that $\tau^\circ=\tau^*(\chi^\circ)$. It only remains to choose the TC such that $\chi\geq\chi^\circ$ at sampling instants. With the same notation as in Lemma \ref{semi_global}, let $\delta^*:={\chi^*}^2\bar{\delta}$ where $\chi^* = \chi^\circ +\frac{m_1} {m_2}{\bar{\rho}}(\chi^\circ-1)$. We redefine $\bar{\varphi}$ in (\ref{Delta}) as
	\begin{eqnarray}
	\bar{\varphi}(t)=\begin{cases}
	\alpha_2(\delta^*),~~~ t\in[0,T^\circ),\\
	0,~~~~~~~~~~\text{elsewhere}.
	\end{cases}\nonumber
	\end{eqnarray} 
	This implies ${\phi_2}(t)=\delta^*$ for $t\in[0,T^\circ)$. Then following similar lines as we derived (\ref{m1m2}), 
	the lower bound on the inter-event times can be calculated by assuming the triggering rule
	\begin{eqnarray}
	2\|\varepsilon\|=m_1{\|\xi\|}+\chi^*m_2.\nonumber
	\end{eqnarray}
	From definition of $\chi$ given in the proof of Lemma \ref{semi_global} it is easy to verify that
	\begin{eqnarray}
	\chi=\frac{m_1\|\xi\|+\chi^*m_2}{m_1\|\xi\|+m_2} \geq \chi^\circ\nonumber
	\end{eqnarray}
	at triggering instants and hence inter-event times are lower bounded by $\tau^\circ$ for $t\leq T^\circ$. 
\end{proof}
\begin{rmk}
	Theorem \ref{thm_enlarge} explores one of the advantages of our proposed strategy where the intersampling intervals are extended to $\tau^\circ$ for $t\in[0,T^\circ]$. The numerical example in section \ref{section examples} suggests that the average sampling time is also improved in this interval. Note that while the results are not explicitly applicable to $t>T^\circ$, numerical examples in \cite{Me_submitted} verify the efficiency of this technique for all $t\geq t_0$.
\end{rmk}
\section{Example}\label{section examples}
\subsection{System model (\cite{Me_submitted})}
Consider the system (\ref{eq sys_1}), with $\xi=[\xi_1~\xi_2]^T$ and\footnote{This example is of a Lur’e type system, which has recently seen attention in the context of event-based control, \cite{Lur's_statefeedback,Cone-nonlin}.}
\begin{eqnarray}\label{sys_exmp}
f(\xi,d)=\begin{pmatrix}
\xi_2\\-H(\xi_1)+d
\end{pmatrix},
~~g(\xi)=\begin{pmatrix}
0\\1
\end{pmatrix},
~~h(\xi,d)=\xi_1,
\nonumber
\end{eqnarray}
$u(t)=\gamma(\xi(t))= -\xi_2(t))$. The piecewise linear function $H:\mathbb{R}\mapsto\mathbb{R}$ is given by: $H(r)=2r$ for $|r|\leq h^*$, $H(r)=h^*+r$ for $ r\geq h^*$ and $H(r)=-h^*+r$ for $r\leq -h^*$, some $h^*\in\mathbb{R}_0^+$ for some non-negative $h^*$. Note that $H$ satisfies $r^2\leq rH(r)\leq 2r^2$ for any $r\in \mathbb{R}$. We study the finite gain $\mathcal{L}_2$-stability this system under event-based implementation of control law. 
 
\subsection{Verification of Assumption \ref{assumption_ISS}}
Letting $V_s(\xi)= \frac{\upsilon_1}{2}\xi^TP\xi+ 2\upsilon_1\int_{0}^{\xi_1}H(r)dr$, $P=[1~1;1~2]$, then (i) holds for $c_1=\frac{\upsilon_1}{2}$, $c_2=c_3=5\upsilon_1$ (see \cite{Me_submitted} for details). Here $\upsilon_1$ is the scaling factor discussed in Remark \ref{lambda}. To show (ii), we start with $\frac{1}{\upsilon_1}\dot{V}_s(\xi)=-\xi_1H(\xi_1)-\xi_2^2+(\xi_1+2\xi_2)d\leq -(1-n_1)\xi_1^2-(1-n_2)\xi_2^2+(\frac{1}{4n_1}+\frac{1}{n_2})d^2-n_1(\xi_1-\frac{d}{2\sigma})^2-n_2(\xi_2-\frac{d}{n_2})^2$ for some positive $n_1$, $n_2$. Choosing $V_c(x)=\frac{1}{\upsilon_1(1-n_2)}V_s(x)$ yields $\dot{V}_c(x)\leq |z|^2-\mu^2|d|^2$ where $\mu^2=\frac{1}{1-n_2}(\frac{1}{4n_1}+\frac{1}{n_2})$. The minimum  $\mu$ is $4.49$  obtained for $n_1=1$, $n_2=0.47$. A less conservative bound may be found with a different choice of $V_c$. 
Finally,  (iii) holds for $\bar{c}_1=\frac{\upsilon_1}{2}\lambda_{max}([5~1;1~2])$,  $\bar{c}_2=\frac{1}{\upsilon_1(1-n_2)}\bar{c}_1$ and $\bar{c}_3=\frac{1}{1-n_2}(\|P\|+4)$, where $\lambda_{max}(\cdot)$ is the maximum eigenvalue. 

\subsection{Triggering condition}
Our design criteria is to guarantee $\mu_d\leq 5$. We consider here two scenarios for $\delta_k$ in (\ref{Delta}):
\begin{eqnarray}
\delta^1_k(t)=D_1 e^{-\varrho_1 t},~~
\delta^2_k(t)=D_2 \frac{\varrho_2^n}{n!},~ n=\lceil \frac{t}{\bar{n}}\rceil,\nonumber
\end{eqnarray}
where $D_1=10$, $D_2=2$, $\varrho
_1=0.05$, $\varrho_2=3$, $\bar{n}=10$. Also, we consider $\alpha_1(r)=\alpha_2(r)=r$ in (\ref{phi}). To cover the strategies discussed in Section \ref{compare}, we categorize our analysis into six cases, depending on the values of $k_1$, $k_2$, $\delta_k^1$, $\delta_k^2$:
\begin{table}[H]
	\vspace*{-0.5em}
	\centering
	\label{}
	\renewcommand{\arraystretch}{.7}
	\begin{tabular}{ccccccc}
		\toprule[1.5pt]   
		case: & ~(i)&~(ii)& ~(iii)&  ~(iv)&~(v)&~(vi)\\
		\midrule
		$(k_1,k_2)$&$~(1,1)$&$~(1,1)$ & $~(1,0)$& $~(0,1)$& $~(0,1)$ & $~(0,0)$ \\
		\vspace{0em} $\delta_k$&$~\delta_k^1$&$~\delta_k^2$ & $~n/a$& $~\delta_k^1$& $~\delta_k^2$ & $~n/a$\\
		\bottomrule[1.5pt]
	\end{tabular}
\vspace{-0.5em}
\end{table}
Cases (i), (ii) are the general dynamic triggering scenarios with both ${\phi_1}$, ${\phi_2}$ effective in condition (\ref{trig_cond}). 
The role of $\phi_1$ (resp., $\phi_2$) is studied in case (iii) (resp., cases (iv), (v)). 
Also, case (vi) results in static TC since both ${\phi_1}$, ${\phi_2}$ are absent.

It is not difficult to verify $\lambda_1=3$, $\lambda_2=\lambda_3=1$ in Lemma \ref{lem_lip}. Therefore, we may choose $\lambda=4.7\times10^{-3}$, $\upsilon_1=3.6\times 10^{-3}$ (which satisfy the required bounds on $\lambda$ given in Remark \ref{lambda}) and obtain $\hat{\tau}=8.9\times10^{-3}$ from (\ref{tau_hat}). Finally, we take $\bar{\delta}=10$, $r_k=0$, $\hat{r}_k=\phi_1(\hat{t}_k^-)$, $s_k=12.5$.  
\subsection{Numerical simulation}
Signal $d(t)$ follows a zero mean Gaussian distribution with variance $1$ over $t\in[0,100)$ and zero everywhere else.
We also take $h^*=0.3$ and run the simulation for $100$ initial conditions uniformly distributed in a circle of radius $1$ over $100$ seconds and finally average the results. The plots are provided for initial condition $\xi_0=(\sin(\frac{\pi}{3}),\cos(\frac{\pi}{3}))$.
\begin{figure}[H]
	\vspace{-0.75em} 
	\hspace*{-0.6em}
	\centering
	\includegraphics[width=1.02\columnwidth]{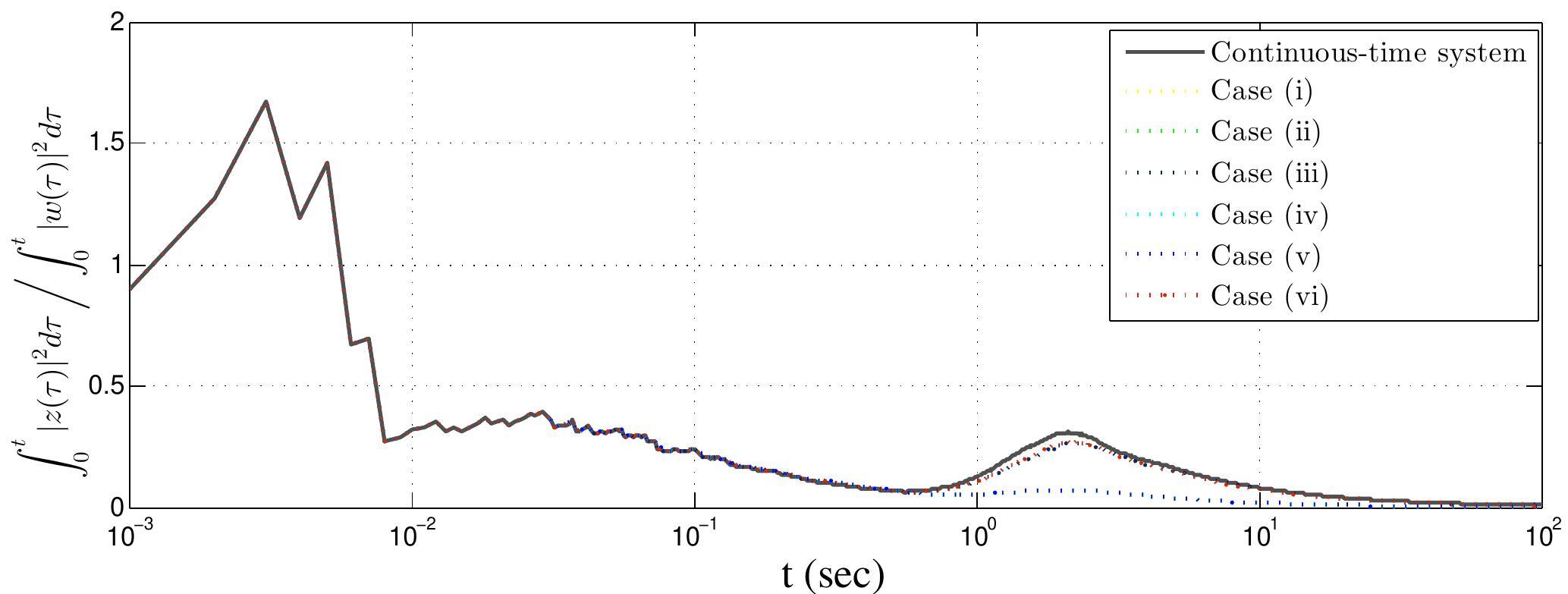}
		\vspace{-1.7em} 
	\caption{Verification of $\mathcal{L}_2$-gain.}
	\label{fig:ex2-1}
	\vspace{-1.5em}	
\end{figure}
\begin{figure}[H]
	\vspace{-0.5em} 
	\hspace*{-0.5em}
	\centering
	\includegraphics[width=1.02\columnwidth]{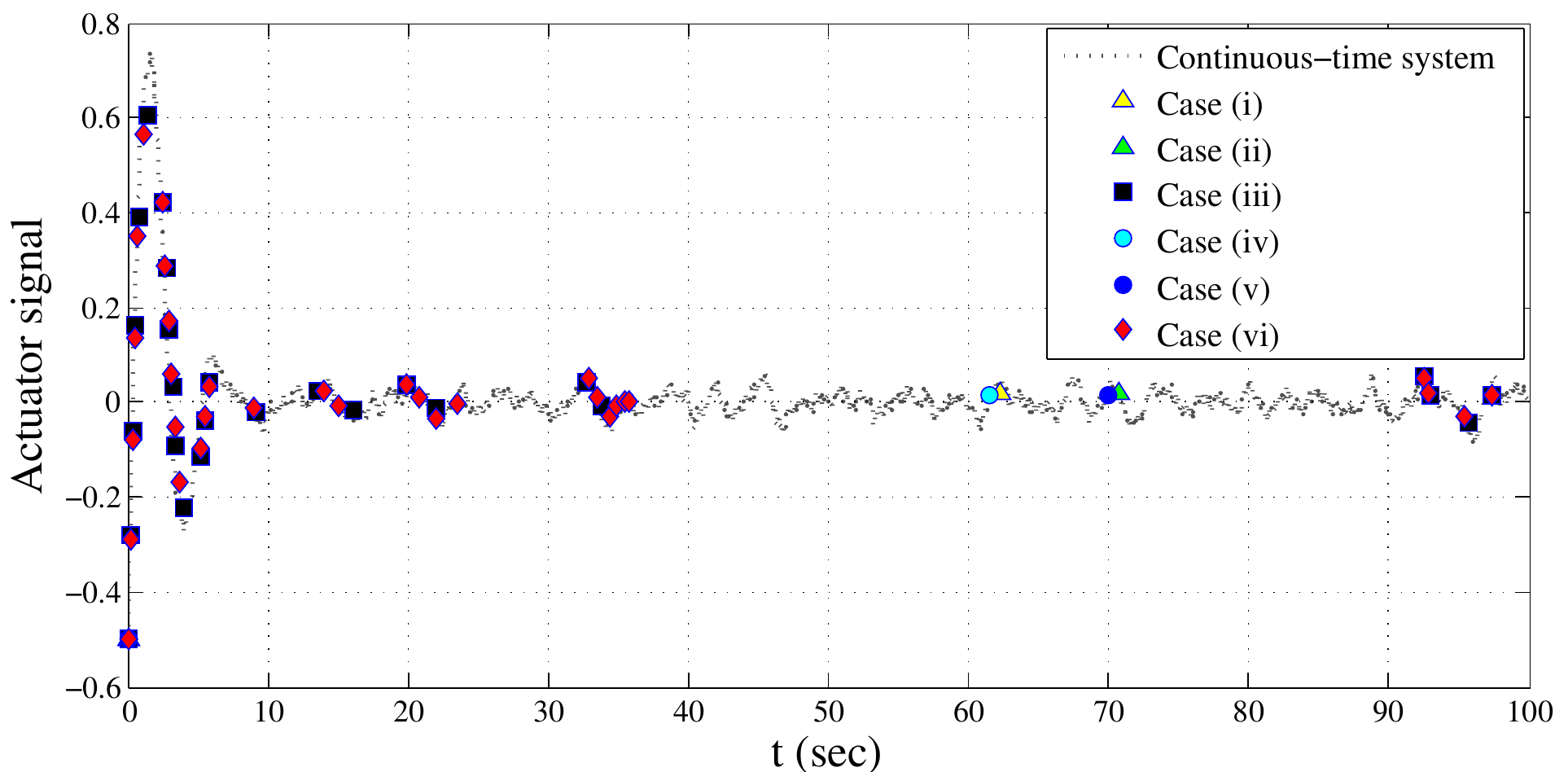}
	\vspace{-1.7em} 
	\caption{Actuator signal at the triggering instants}
	\label{fig:ex2-1}
	\vspace{-0.5em}	
\end{figure}
\vspace*{-0em}
\begin{table}[H]
	\vspace{-1.5em}
	\centering
	\caption{Comparison of different scenarios.}
	\label{tab:table_final}
	\renewcommand{\arraystretch}{.7}
	\vspace{-.5em}
	\begin{tabular}{ccccccc}
		\toprule[1.5pt]
		case:~~~~& (i)~~&(ii)~~& (iii)~~& (iv)~~&(v)~~&(vi)~\\
		\midrule
		$N$ ~~~~& 3.24~~&3.25~~ & 12.9~~ & 4.34~~& 4.72~~& 18.7~\\
		$\tau_{m}$$\times 10^{2}$ ~~~~& 22.3~~&14.2~~& 3.3~~ & 22.6~~& 14.8~~& 1.8~\\
		\bottomrule[1.5pt]
	\end{tabular}
\vspace{-0.5em}
\end{table}
Table \ref{tab:table_final} illustrates the number of triggerings ($N$) and MIET ($\tau_m$) for different scenarios. The values of $\tau_m$ are in msec. Comparing different cases, it is clear that both ${\phi_1}$, ${\phi_2}$ improve transmission rate, however, when $k_2$ is non-zero, the number of samples and $\tau_{m}$ improve more significantly. This implies the effectiveness of parameter ${\phi_2}$ compared to ${\phi_1}$. This example suggests that when trajectories of open-loop ETS are either converging to the origin or staying bounded, since $\varepsilon$ remains bounded, an appropriate choice of $\bar{\delta}$, $\delta_k$ in ${\phi_2}$ avoid unnecessary samplings effectively.    
The fact that $\tau_{m} \gg \hat{\tau}$ supports the merit of our design compared to time-regularization method as this latter scheme often degenerates to periodic samplings ($\hat{\tau}$ in the case of our design) when state is near origin. Our method indeed outperforms time-triggered scheme too (with a period equal to the MIET), since the average inter-event intervals, {\it i.e.,} $\frac{100}{N}$, are much larger than the MIETs.

\section{Conclusion}\label{section conclusion}
This paper introduces a framework for event-triggered design by focusing on $\mathcal{L}_p$ performance problem. Our proposed ETM is based on two dynamic variables ${\phi_1}$ and ${\phi_2}$. Indeed, ${\phi_1}$ has the role of dynamic TC (\cite{Girard}) and is intended to enlarge inter-event times. While ${\phi_2}$ serves to extend the inter-event times too, it also has the critical roles of (i) enabling us to analytically predict the increase of MIET for a desired period of time, and, more importantly, (ii) excluding Zeno behaviour.     
An interesting future research topic is to check if an \emph{output-based} TC can be equipped with the dynamic parameters $\phi_1,\phi_2$ to enjoy the benefits offered by these parameters. Note that contrary to dwell-time approaches where the Zeno-freeness is granted a priori, the output-based generalization of our approach requires Zeno behaviour to be carefully ruled out.   


\section*{Appendix}\label{section appendix}	
\begin{newproofof}
	\textit{Lemma \ref{lem_state_bound}.}
	We shall need the following proposition whose proof can be obtained applying integration by parts and Assumption \ref{ass_alpha}.
	\begin{proposition}\label{lem_phi-1}
		For ${\phi}_1$ defined in (\ref{phi}) and (\ref{IC}), we have 
		\begin{eqnarray}
		\int_{t_k}^{\hat{t}_k}-e^{\nu\tau} d{\phi}_1(\tau) \leq r_k e^{\nu t_k},~~\int_{\hat{t}_k}^{t_{k+1}}-e^{\nu\tau} d{\phi}_1(\tau) \leq \hat{r}_k e^{\nu \hat{t}_k}.\nonumber
		\end{eqnarray}
	\end{proposition}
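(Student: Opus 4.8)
The plan is to prove the two inequalities separately --- one over the sub-interval $[t_k,\hat t_k)$ and one over $[\hat t_k,t_{k+1})$ --- each by a single Riemann--Stieltjes integration by parts, followed by Assumption \ref{ass_alpha} and the non-negativity of $\phi_1$ furnished by Proposition \ref{prop_phi_positive}. On $[t_k,\hat t_k)$ the map $\phi_1$ is absolutely continuous, since by (\ref{phi}) it solves $\dot\phi_1=-\alpha_1(\phi_1)+k_2\phi_2-\varphi$ with a continuous right-hand side there, so $\int_{t_k}^{\hat t_k}-e^{\nu\tau}\,d\phi_1(\tau)$ is a classical integral and integrating by parts gives
\[
\int_{t_k}^{\hat t_k}-e^{\nu\tau}\,d\phi_1(\tau)=e^{\nu t_k}\phi_1(t_k)-e^{\nu\hat t_k}\phi_1(\hat t_k^-)+\nu\int_{t_k}^{\hat t_k}e^{\nu\tau}\phi_1(\tau)\,d\tau .
\]
By the initial condition (\ref{IC}) the first term equals $r_k e^{\nu t_k}$, and by Proposition \ref{prop_phi_positive} the second term is non-positive, so it remains only to dispose of the residual $\nu\int_{t_k}^{\hat t_k}e^{\nu\tau}\phi_1(\tau)\,d\tau$.

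This is precisely where Assumption \ref{ass_alpha} enters: since $\alpha_1(r)\geq\nu r$ and $\phi_1\geq 0$, one has $\nu\phi_1(\tau)\leq\alpha_1(\phi_1(\tau))$ pointwise, hence $\nu\int_{t_k}^{\hat t_k}e^{\nu\tau}\phi_1(\tau)\,d\tau\leq\int_{t_k}^{\hat t_k}e^{\nu\tau}\alpha_1(\phi_1(\tau))\,d\tau$. In the (weighted) Lyapunov estimate of the proof of Lemma \ref{lem_state_bound} where this proposition is used, one substitutes $\varphi=-\dot\phi_1-\alpha_1(\phi_1)+k_2\phi_2$ from (\ref{phi}), so the term $-\alpha_1(\phi_1)$ travels alongside $-\dot\phi_1$; the residual is therefore cancelled against that companion term, and what survives from the display above is $e^{\nu t_k}\phi_1(t_k)-e^{\nu\hat t_k}\phi_1(\hat t_k^-)\leq r_k e^{\nu t_k}$. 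The second inequality is obtained verbatim on $[\hat t_k,t_{k+1})$, where $\phi_1$ restarts from the reset value $\phi_1(\hat t_k)=\hat r_k$ of (\ref{IC}): the same integration by parts produces $\hat r_k e^{\nu\hat t_k}-e^{\nu t_{k+1}}\phi_1(t_{k+1}^-)+\nu\int_{\hat t_k}^{t_{k+1}}e^{\nu\tau}\phi_1(\tau)\,d\tau$, and the same two ingredients yield $\leq\hat r_k e^{\nu\hat t_k}$.

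The only point that really needs care is the bookkeeping around the reset of $\phi_1$ at $\hat t_k$: the two sub-intervals must be handled independently because $\phi_1$ jumps there, and in the first estimate one must keep the left-limit $\phi_1(\hat t_k^-)$ rather than the reset value $\hat r_k$. Since both integrals are over half-open intervals whose endpoints are read off as one-sided limits, no jump term contaminates either bound, and the non-negativity of $\phi_1$ from Proposition \ref{prop_phi_positive} makes the boundary contributions at $\hat t_k$ and $t_{k+1}$ favourable. The one substantive --- though short --- move is using Assumption \ref{ass_alpha} to convert the positive residual integral into the $\alpha_1(\phi_1)$ term it is paired against in the estimate invoking this proposition; everything else is routine.
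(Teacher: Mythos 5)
Your computation is essentially the argument the paper intends: its ``proof'' is the one-line remark that the result follows from integration by parts and Assumption \ref{ass_alpha}, and that is exactly what you carry out, with the boundary terms handled through (\ref{IC}) and the nonnegativity of $\phi_1$ from Proposition \ref{prop_phi_positive}, and with the two half-open intervals treated separately so the reset at $\hat t_k$ causes no spurious jump term. The one point worth making explicit is that what you actually establish is not the displayed inequality in isolation but the combined bound $\int_{t_k}^{\hat t_k}e^{\nu\tau}\bigl(-\dot\phi_1(\tau)-\alpha_1(\phi_1(\tau))\bigr)\,d\tau\le r_k e^{\nu t_k}$ (and its analogue on $[\hat t_k,t_{k+1})$): after integration by parts the residual $\nu\int_{t_k}^{\hat t_k}e^{\nu\tau}\phi_1(\tau)\,d\tau$ is nonnegative and is not in general dominated by the favourable boundary term $e^{\nu\hat t_k}\phi_1(\hat t_k^-)$, so --- as you observe --- it can only be disposed of by pairing it, via Assumption \ref{ass_alpha}, with the $-\alpha_1(\phi_1)$ term that accompanies $-\dot\phi_1$ when $\varphi$ is eliminated using (\ref{phi}). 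This is a fair and in fact the cleaner reading of the proposition: in the proof of Lemma \ref{lem_state_bound} one should retain $-\alpha_1(\phi_1)$ alongside $-\dot\phi_1$ (rather than discarding it before invoking the proposition, which is where the paper's statement picks up its slack), and then your combined inequality is precisely what that estimate requires; used in this way your argument is complete and rests on the same two ingredients the paper cites.
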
 
	Now we start from Assumption \ref{assumption_ISS}(ii) to write 
	\begin{eqnarray}
	~\dot{V}_c (\xi)\leq 
	\mu^p\|d\|^p -\|z\|^p+\nabla V_c(\xi)g(\xi) (\gamma(\xi+\varepsilon)-\gamma(\xi))\nonumber
	\end{eqnarray}
	Define ${V}(\xi)=V_s(\xi)+V_{c,\lambda}(\xi)$, one can apply Assumption \ref{assumption_ISS}(i), (\ref{trig_cond}), (\ref{varphi}) to get
	\begin{eqnarray}
	\dot{{V}}(\xi)\leq -c_1(1-\sigma)\|\xi\|^p+(\lambda\mu^p+c_3)\|d\|^p+\varphi\leq ~~~~~~~~~~~~~~~~~ \nonumber\\  -c_1(1-\sigma)\|\xi\|^p+(\lambda\mu^p+c_3)\|d\|^p+k_2{\phi_2}-\alpha_1({\phi_1})-\dot{\phi}_1~~\nonumber
	\end{eqnarray}
	where the second inequality is obtained using (\ref{phi}). Let $A={\lambda\mu_d^p{\|d\|}_\infty^p+ {k_2\|{\phi_2}\|}_{\infty}}$,
	we can apply Proposition \ref{prop_phi_positive} to write 
	\begin{eqnarray}
	\dot{{V}} (\xi)+ \nu{V}(\xi)\leq A-\dot{\phi}_1,\nonumber
	\end{eqnarray}
	where we used ${V}(\xi)\leq(\bar{c}_1+\bar{c}_2)\|\xi\|^p$ suggested by Assumption \ref{assumption_ISS}(iii). 
	We then conclude from Proposition \ref{lem_phi-1} that 
	\begin{eqnarray}
	{V}(\xi(\hat{t}_k)) e^{\nu{{\hat{t}_k}}}\leq {V}(\xi(t_k))e^{\nu{{t}_k}}+r_k e^{\nu t_k} +A\int_{t_k}^{\hat{t}_k}e^{\nu\tau}d\tau,\nonumber~~~~ \\
	{V}(\xi(t_{k+1})) e^{\nu{t_{k+1}}}	\leq {V}(\xi(\hat{t}_k))e^{\nu{\hat{t}_k}}+\hat{r}_k e^{\nu \hat{t}_k}+A\int_{\hat{t}_k}^{t_{k+1}}e^{\nu\tau}d\tau.\nonumber
	\end{eqnarray}
	Adding the two inequalities and apply the result to the sampling intervals until $t\geq t_0$, Assumption \ref{ass_impose} yields ${V}(\xi(t))e^{\nu t}\leq {V}(\xi_0)+(\theta_2+\theta_3)e^{\nu t}+A\int_{t_0}^{t}e^{\nu\tau}d\tau$. Therefore, 
	\begin{eqnarray}
	{V}(\xi(t)) \leq  {V}(\xi_0) e^{-\nu t}+\theta_2+\theta_3+A\int_{t_0}^{t}e^{-\nu(t-\tau)}d\tau \nonumber\\ \leq  {V}(\xi_0)+\theta_2+\theta_3+\nu^{-1}A~~~\quad~~~~~~~~~~~~~\nonumber
	\end{eqnarray}
	which gives the desired result.
\end{newproofof}

\begin{newproofof}
	\textit{Proposition \ref{prop_phi_positive}.}
	From (\ref{trig_cond}), (\ref{phi}) $\phi_1$ satisfies $\dot{\phi}_1 + \alpha_1({\phi_1}) + k_1{\phi_1}\geq 0$ for  $t\in[t_k,t_{k+1})$. Note that ${\phi_1}(t)\equiv 0$ is a solution to $\dot{\phi}_1+\alpha_1({\phi_1})+k_1{\phi_1}= 0$. Therefore, since ${\phi_1}(t_k),{\phi_1}(\hat{t}_k)\geq 0$ it follows that ${\phi_1}(t)\geq 0$ for all $t\geq t_0$. For the second part, since $\bar{\delta}$ (resp., $\delta_k(t)$) is a solution of $\phi_2$ in (\ref{phi}) for $t\in[t_k,\hat{t}_k)$ (resp., $t\in[\hat{t}_k,t_{k+1})$) and $s_k\geq \bar{\delta}$ (resp., $\hat{s}_k=\delta_k(\hat{t}_k)$), it follows that ${\phi}_2(t)\geq \bar{\delta}$ (resp., ${\phi}_2(t)=\delta_k(t)$) over this interval. Finally, from the positiveness of $\bar{\delta}$ and $\delta_k(t)$, $\phi_2(t)\geq 0$ for all $t\geq t_0$. 
\end{newproofof}

\begin{newproofof}
	\textit{Lemma \ref{lem_error}.}
	We first define the notation below:
	\begin{eqnarray}
	\mathcal{I}(x)=\int_{t_k}^{s}e^{\lambda_2 (s-\tau)}\|x(\tau)\| d\tau,~
	\mathcal{Q}(s)=(\int_{t_k}^{s}e^{\frac{\lambda_2 q}{2} (s-\tau)} d\tau)^{\frac{p}{q}},\nonumber\\
	\mathcal{J}(x)=\int_{t_k}^{s}e^{\frac{\lambda_2 p}{2} (s-\tau)}\|x(\tau)\|^p d\tau,~\mathcal{P}(s)=\int_{t_k}^{s}e^{\frac{\lambda_2 p}{2} (s-t_k)} ds.\nonumber
	\end{eqnarray}
	From definition of $\varepsilon$ and Lemma \ref{lem_lip} we have
	\begin{eqnarray}
	\frac{d\|\varepsilon\|}{dt} \leq \|\dot{\varepsilon}\|=\|\dot{\xi}\|\leq \lambda_1\|\xi\|+\lambda_2\|\varepsilon\|+\lambda_3\|d\|,\nonumber
	\end{eqnarray}
	solving which for $\varepsilon(t_k)=0$ and $s\geq t_k$ gives $\|\varepsilon(s)\|\leq \lambda_1 \mathcal{I}(\xi)+\lambda_3 \mathcal{I}(d)$.
	Then from Lemma \ref{lem_1}(i) we conclude 
	\begin{eqnarray}
	\|\varepsilon(s)\|^p \leq 2^p(\lambda_1^p {\mathcal{I}}^p(\xi){+} \lambda_3^p {\mathcal{I}}^p(d) ) {\leq} 2^p{\mathcal{Q}}(s) (\lambda_1^p{\mathcal{J}}(\xi){+}\lambda_3^p {\mathcal{J}}(d) )\nonumber
	\end{eqnarray}
	where the last inequality is obtained using Lemma \ref{lem_1}(ii).
It is then straightforward to check that for $t \geq t_k$, $\int_{t_k}^{t}{\mathcal{J}}(\xi) ds\leq {\mathcal{P}}(t) \int_{t_k}^{t}\|\xi(\tau)\|^p d\tau$ 
and hence conclude
\begin{eqnarray}
\int_{t_k}^{t}\|\varepsilon(s)\|^pds \leq 2^p\int_{t_k}^{t} {\mathcal{Q}}(s)(\lambda_1^p{\mathcal{J}}(\xi)+\lambda_3^p{\mathcal{J}}(d))ds ~~~~~~~~~ \nonumber \\
~~~~~~\leq 2^p{\mathcal{Q}}(t){\mathcal{P}}(t)(\lambda_1^p\int_{t_k}^{t}\|\xi(\tau)\|^p d\tau+\lambda_3^p\int_{t_k}^{t}\|d(\tau)\|^p d\tau). \nonumber
\end{eqnarray}
The proof is then complete taking $t=\hat{t}_k$ since $\psi(\hat{\tau},\lambda_2)=2^p{\mathcal{Q}}(\hat{t}_k){\mathcal{P}}(\hat{t}_k)$.
\end{newproofof}

\bibliographystyle{IEEEtranS}
\bibliography{GF}

\end{document}